\documentclass[a4paper]{article}
\usepackage[margin=1.25in]{geometry}
\usepackage{authblk}
\usepackage[utf8]{inputenc}
\usepackage{microtype}
\usepackage{amsfonts}
\usepackage{amssymb}
\usepackage{amsmath}
\usepackage{amsthm}
\usepackage{thm-restate}
\usepackage{comment}
\usepackage{xspace}
\usepackage{enumitem}
\usepackage[hidelinks]{hyperref}
\usepackage[capitalize]{cleveref}
\usepackage[ruled,linesnumbered,noend]{algorithm2e}

\crefname{algocf}{Algorithm}{Algorithms}
\newtheorem{theorem}{Theorem}
\newtheorem{lemma}[theorem]{Lemma}

\newtheorem{fact}[theorem]{Fact}

\newtheorem{definition}[theorem]{Definition}
\newtheorem{proposition}[theorem]{Proposition}
\newtheorem{remark}[theorem]{Remark}

\newcommand{\LCP}{\text{\rm LCP}}

\newcommand{\per}{\text{\rm per}}
\newcommand{\lyn}{\text{\rm lyn}}

\newcommand{\dotdot}{\mathinner{\ldotp\ldotp}}
\newcommand{\dd}{\dotdot}

\renewcommand{\P}{\mathsf{P}}

\newcommand{\T}{\mathcal{T}}
\newcommand{\F}{\mathcal{F}}

\newcommand{\Q}{\mathsf{Q}}

\newcommand{\sub}{\subseteq}

\newcommand{\Oh}{\mathcal{O}}

\newcommand{\OhTilde}{\tilde{\mathcal{O}}}
\newcommand{\maxPairLCP}{\mathrm{maxPairLCP}}

\newcommand{\lcs}{\operatorname{lcs}}
\newcommand{\LCSl}{\ensuremath{LCS_{\ell}}\xspace}

\SetKwRepeat{Do}{do}{while}

\newcommand{\defproblem}[3]{
	\vspace{2mm}
	\noindent\fbox{
		\begin{minipage}{0.96\textwidth}
			#1
			
			\smallskip
			\noindent
			{\bf{Input:}} #2
			
			\smallskip
			\noindent
			{\bf{Output:}} #3
		\end{minipage}
	}
	\vspace{2mm}
}

\title{Time-Space Tradeoffs for Finding a Long Common Substring\thanks{Supported by ISF grants no. 1278/16 and 1926/19, a BSF grant no. 2018364, and an ERC grant MPM (no. 683064) under the EU's Horizon 2020 Research and Innovation Programme.}}

\begin{document}
	
\setenumerate{itemsep=0ex, parsep=1pt, topsep=1pt}

\renewcommand\Affilfont{\normalsize}

\author[]{Stav Ben-Nun}
\author[]{Shay Golan}
\author[]{Tomasz Kociumaka}
\author[]{Matan Kraus}

\affil[]{Department of Computer Science, Bar-Ilan University, Ramat Gan, Israel}
\affil[]{\texttt{stav.bennun@live.biu.ac.il, golansh1@cs.biu.ac.il, kociumaka@mimuw.edu.pl, krausma@biu.ac.il}}

\date{\vspace{-1.5cm}}
\maketitle
\begin{abstract}
	We consider the problem of finding, given two documents of total length $n$, a longest string occurring as a substring of both documents. This problem, known as the \textsc{Longest Common Substring (LCS)} problem, has a classic $\Oh(n)$-time solution dating back to the discovery of suffix trees (Weiner, 1973)
	and their efficient construction for integer alphabets (Farach-Colton, 1997).
	However, these solutions require $\Theta(n)$ space, which is prohibitive in many applications. To address this issue, Starikovskaya and Vildhøj (CPM 2013) showed that for $n^{2/3} \le s \le n$, the LCS problem can be solved in $\Oh(s)$ space and $\OhTilde(\frac{n^2}{s})$ time.\footnote{The $\OhTilde$ notation hides $\log^{\Oh(1)} n$ factors.}
	Kociumaka et al. (ESA 2014) generalized this tradeoff to $1 \leq s \leq n$, thus providing a smooth time-space tradeoff from constant to linear space. 
	In this paper, we obtain a significant speed-up for instances where the length $L$ of the sought LCS is large. 
	For $1 \leq s \leq n$, we show that the LCS problem can be solved in $\Oh(s)$ space and $\OhTilde(\frac{n^2}{L\cdot s}+n)$ time. 
	The result is based on techniques originating from the \textsc{LCS with Mismatches} problem (Flouri et al., 2015; Charalampopoulos et al., CPM 2018), on space-efficient locally consistent parsing (Birenzwige et al., SODA 2020), and on the structure of maximal repetitions (runs) in the input documents.
\end{abstract}

\section{Introduction}\label{sec:intro}
The \textsc{Longest Common Substring} (LCS) problem is a fundamental text processing problem with numerous applications; see e.g.~\cite{applicationsAbboud:2015,genomeThomas:2012,biologyFunakoshi:2019}.
Given \emph{two} strings (documents) $S_1,S_2$, the LCS problem asks for a longest string occurring in $S_1$ and $S_2$. We denote the length of the longest common substring by $\lcs(S_1,S_2)$.

The classic text-book solution to the LCS problem is to build the (generalized) suffix tree of the documents and find the node that corresponds to an LCS~\cite{WeinerSTconstruct,Gusfield:97,farachSuffixTree:1997}. 
While this can be achieved in linear time, it comes at the cost of using $\Omega(n)$ words (of $\Theta(\log n)$ bits each) to store the suffix tree.
In applications with large amounts of data or strict space constraints, this renders the classic solution impractical.
To overcome the space challenge of suffix trees, succinct and compressed data structures have been subject to extensive research~\cite{compressedGrossi:2005,compressedNavarro:2007}. Nevertheless, these data structures still use $\Omega(n)$ bits of space in the worst-case.
Starikovskaya and Vildhøj~\cite{diffCover:2013} showed that for $n^{2/3} \le s \le n$, the LCS problem can be solved in $\Oh(\frac{n^2}{s}+s\log n)$ time using $\Oh(s)$ space.
Kociumaka et al.~\cite{KociumakaLastResult} subsequently improved the running time to $\Oh(\frac{n^2}{s})$
and extended the parameter range to $1\le s \le n$.

These previous works also considered a generalized version of the LCS problem,
where the input consists of $m$ documents $S_1,S_2,\ldots,S_m$ (still of total length $n$) and an integer $2 \leq d \leq m$. The task there is to compute a longest string occurring as a substring of at least $d$ of the $m$ input documents. In this setting, Starikovskaya and Vildhøj~\cite{diffCover:2013} 
achieve $\Oh(\frac{n^2 \log^2 n}{s}(d\log^2{n}+d^2))$ time and $\Oh(s)$ space for $n^{2/3} \le s \le n$,
whereas Kociumaka et al.~\cite{KociumakaLastResult} showed a solution which takes  $\Oh(\frac{n^2}{s})$ time and $\Oh(s)$ space for $1 \leq s \leq n$. The cost of this algorithm matches both a classic $\Theta(n)$-space algorithm~\cite{Hui:92} and the time-space tradeoff for $d=m=2$.
Nevertheless, in this paper we focus on the LCS problem for \emph{two} strings only.

Kociumaka et al.~\cite{KociumakaLastResult} additionally provided a lower bound which states that any deterministic algorithm using $s \le \frac{n}{\log n}$ space must cost $\Omega(n\sqrt{\log(n/(s \log n))/ \log\log(n/(s\log n))})$ time. This lower bound is actually derived for the problem of distinguishing whether $\lcs(S_1,S_2)=0$, i.e., deciding if the two input strings have any character in common.
This state of affairs naturally leads to a question of whether distinguishing between $\lcs(S_1,S_2)<\ell$ and $\lcs(S_1,S_2)\ge \ell$ gets easier as $\ell$ increases, or equivalently,
whether $L:=\lcs(S_1,S_2)$ can be computed more efficiently when $L$ is large.
This case is relevant for applications since the existence of short common substrings is less meaningful for measuring string similarity.

\subsection{Our Results}
We provide new sublinear-space algorithms for the LCS problem optimized for inputs with a long common substring. 
The algorithms are designed for the word-RAM model with word size $w=\Theta(\log n)$, and they work for integer alphabets $\Sigma = \{1,2,\ldots,n^{\Oh(1)}\}$. 
Throughout the paper, the input strings reside in a read-only memory and any space used by the algorithms is a working space; furthermore, we represent the output by witness occurrences in the input strings so that it fits in $\Oh(1)$ machine words. 
Our main result is as follows:
\begin{restatable}{theorem}{thmmainresult}\label{thm:mainresult}
	Given $s$ with $1\le s \le n$, the LCS problem with $L=\lcs(S_1,S_2)$ can be solved deterministically in $\Oh(s)$ space and $\Oh(\frac{n^2\log n \log^*n }{s \cdot L}+n \log n)$ time,%
	\footnote{The \emph{iterated logarithm} function $\log^*$ is formally defined with $\log^* x = 0$ for $x\le 1$ and $\log^*x = 1+\log^*(\log x)$ for $x>1$. 
	In other words $\log^*n$ is the smallest integer $k\ge 0$ such that $\log^{(k)}x\le 1$.}
	and in $\Oh(s)$ space and $\Oh(\frac{n^2\log n}{s \cdot L}+n \log n)$ time with high probability using a Las-Vegas randomized algorithm.
\end{restatable}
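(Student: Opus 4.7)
My plan is to reduce the optimization problem to a decision problem (``is $\lcs(S_1,S_2)\ge \ell$?'') via exponential search over $\ell$, and then attack the decision problem for a fixed $\ell$ using locally consistent parsing, batched scans, and the runs structure of the inputs. If the decision subroutine runs in $\Oh(s)$ space and $\Oh(\tfrac{n^2 \log n \log^* n}{s\ell} + n)$ deterministic time, then searching downwards over $\ell \in \{n, n/2, n/4, \ldots\}$ makes the dominant cost occur at $\ell = \Theta(L)$, and the geometric tail together with the $\Oh(\log n)$ iterations yields the claimed overall bound; swapping the deterministic parsing for a hashed variant removes the $\log^* n$ factor for the Las-Vegas version.

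For a fixed threshold $\ell$, the first step is to produce anchor sets $A_1\sub[1,|S_1|]$ and $A_2\sub[1,|S_2|]$ by invoking the Birenzwige--Golan--Porat locally consistent parsing. This construction streams in $\Oh(s)$ working space and $\Oh(n\log^* n)$ deterministic time, producing anchors of density $\Theta(1/\ell)$, so $|A_1|+|A_2|=\Oh(n/\ell)$. The local consistency property ensures that any two matching substrings of length at least $2\ell$ in $S_1$ and $S_2$ induce identical anchor patterns, so whenever $\lcs(S_1,S_2)\ge 2\ell$ there must exist an aligned pair $(i,j)\in A_1\times A_2$ whose $\Theta(\ell)$-length contexts agree. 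Consequently, the decision problem reduces to finding such an aligned pair and verifying the match.

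To find aligned pairs within the prescribed budget, I would partition $A_1$ into $\Oh(n/(s\ell))$ batches of $\Theta(s)$ anchors. For each batch $B$, I would build in working memory a dictionary over the $\Theta(\ell)$-length contexts of the anchors in $B$---Karp--Rabin fingerprints under random hashing for the Las-Vegas algorithm, or a deterministic signature scheme extracted from the same parsing for the worst-case variant---and then stream through $S_2$ once. For each $j\in A_2$ encountered during the scan, I would query the dictionary with the context around $j$ and, on a hit, extend the match via a longest common extension. The scan per batch costs $\Oh(n)$ plus the extension work, which across all $\Oh(n/(s\ell))$ batches contributes the $\Oh(\tfrac{n^2}{s\ell})$ component of the running time.

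The main obstacle is bounding the total LCE work, since in periodic regions a single long common substring may be witnessed by many anchor pairs, each seemingly demanding an $\Omega(\ell)$-time extension without any preprocessing of the inputs. Following the template of Flouri et al.\ and Charalampopoulos et al.\ for LCS with mismatches, I would compute and stream the maximal repetitions (runs) of $S_1$ and $S_2$ in $\Oh(s)$ space alongside the scan, and handle all matches falling inside a single run in bulk via the run's period rather than once per anchor pair. Since the total number of runs is $\Oh(n)$ and each position participates in $\Oh(\log n)$ runs, the amortized extension cost per batch is $\Oh(n\log n)$, which absorbs into the final $\Oh(\tfrac{n^2\log n\log^* n}{sL}+n\log n)$ bound. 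Standard bookkeeping during the scan then recovers witness occurrences of the longest common substring, completing the algorithm.
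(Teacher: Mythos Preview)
Your exponential search over $\ell$ is fine and matches the paper's outer loop. The real gap is in the treatment of periodic regions. You assert that the locally consistent parsing guarantees an aligned anchor pair inside \emph{any} common substring of length $\ge 2\ell$, but this is false: the defining compactness property of a $(\tau,\delta)$-partitioning set says precisely that any gap longer than $\tau$ between consecutive selected positions is periodic with period $\le \tau$. In other words, a long highly periodic common substring may contain \emph{no} anchors at all, and then your dictionary scan never produces a hit for it. You frame the periodic issue as ``too many anchor pairs causing expensive LCE work,'' but the actual obstruction is the opposite---too few (possibly zero) anchors. The paper resolves this by adding a second family of anchors derived from the $(d,\rho)$-runs of $S_1$ and $S_2$: the run endpoints and the first two occurrences of the Lyndon root inside each run (Lemmas~\ref{lemma:addEdgeOfPeriodic} and~\ref{lemma:addLyndonRoots}, Proposition~\ref{proposition:periodic-anchored-sets}). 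Without something equivalent, your decision subroutine can miss the LCS entirely.

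Even setting the periodic gap aside, the rest of your pipeline is structurally different from the paper and leaves several steps unspecified. The paper does not batch anchors and stream with a dictionary; instead it reduces the anchored problem to the \textsc{Two String Families LCP} problem and solves it via a sparse suffix tree, which computes all the required longest common extensions implicitly in $\Oh(n\log n)$ time with no separate ``bound the LCE work'' argument. The time--space tradeoff then comes from a self-reduction that cuts $S_1$ and $S_2$ into overlapping pieces of length $\Oh(m)$ and runs the full procedure on each pair of pieces (Algorithm~\ref{alg:reduction}), not from batching the anchor set. Your alternative route would additionally need (i) a concrete deterministic length-$\ell$ signature that can be maintained while streaming in $\Oh(s)$ space, and (ii) an actual amortization showing the total naive LCE extension time per batch is $\Oh(n\log n)$; the remark that ``each position participates in $\Oh(\log n)$ runs'' does not by itself bound the number or length of extensions triggered by dictionary hits.
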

We remark that \cref{thm:mainresult} improves upon the result of Kociumaka et al.~\cite{KociumakaLastResult} whenever
$s < \frac{n}{\log n}$ and $L > \log n \log^* n$ (or $L > \log n$ if randomization is allowed).

We also show that the $\log$ factors be removed from the running times in \cref{thm:mainresult} if $s=\Theta(1)$. In fact, this yields an improvement upon \cref{thm:mainresult} as long as $s < \log n \log^*n$.
\begin{restatable}{theorem}{thmconstantspaceresult}\label{thm:constantspaceresult}
	The LCS problem can be solved deterministically in $\Oh(1)$ space and $\Oh(\frac{n^2}{L})$ time,
	where $L=\lcs(S_1,S_2)$.
\end{restatable}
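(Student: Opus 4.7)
The plan is to combine a doubling search with a constant-space exact pattern matcher (the \emph{Crochemore--Perrin two-way algorithm} or \emph{Galil--Seiferas}), which locates occurrences of a length-$m$ pattern in a length-$n$ text in $\Oh(m+n)$ time using $\Oh(1)$ words of space.

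In the first phase I locate the scale of $L$ by doubling. For $\ell=1,2,4,\ldots$ I sample positions of $S_1$ at step $d=\lceil \ell/3\rceil$ and, for every sample $i$, run the matcher to test whether $P_i := S_1[i\dd i+\ell-1]$ occurs anywhere in $S_2$ (halting at the first occurrence). The phase stops at the first $\ell$ for which no sample succeeds; I set $\ell^*$ to the previous value. A success certifies $L\ge\ell$, and if $L\ge\ell+d-1$ then the LCS occurrence in $S_1$ contains a sampled position aligned with the LCS, so some $P_i$ matches in $S_2$. Hence $L\in[\ell^*,3\ell^*)$. An iteration costs $\Oh((n/\ell)\cdot n)=\Oh(n^2/\ell)$, and summing geometrically over $\ell\le 2\ell^*$ yields $\Oh(n^2/L)$ time in total.

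In the second phase I refine to the exact $L$. I rerun the scan with $\ell=\ell^*$ but now enumerate \emph{all} occurrences of every $P_i$ in $S_2$, and for each occurrence $(i,j)$ extend forward and backward by character comparisons, aborting as soon as the extension surpasses $3\ell^*$. The maximum extension length equals $L$. The only obstacle is that $P_i$ may have $\Theta(n)$ overlapping occurrences in $S_2$, so naive per-occurrence extension is too costly. I resolve this on each sample $i$ by a periodicity dichotomy. If $P_i$ has no period $\le \ell^*/2$, the periodicity lemma forces its occurrences in $S_2$ to be pairwise at distance $>\ell^*/2$, so there are only $\Oh(n/\ell^*)$ of them, and each extension, capped at length $\Oh(\ell^*)$, costs $\Oh(\ell^*)$, for $\Oh(n)$ per sample. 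Otherwise $P_i$ has a period $p\le\ell^*/2$, and its occurrences in $S_2$ lie inside maximal $p$-runs of $S_2$. Detecting consecutive occurrences at distance $p$ while the matcher streams, I aggregate them into a single run $S_2[\alpha\dd\beta]$ and separately discover the maximal $p$-run $S_1[i_1\dd i_2]$ containing $i$; inside these aligned runs, the length of the maximal common extension through \emph{any} occurrence is a closed-form function of $(i-i_1,\, i_2-i,\, \beta-\alpha,\, \ell^*)$ plus the two boundary characters that break the periodicity, computable in $\Oh(1)$ time. Since the $p$-runs of $S_2$ are pairwise disjoint, the total work per sample is $\Oh(n)$.

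With $\Oh(n/\ell^*)$ samples in the second phase, each costing $\Oh(n)$ time, Phase~2 runs in $\Oh(n^2/\ell^*)=\Oh(n^2/L)$; both phases keep only the matcher's $\Oh(1)$ state, a few pointers into $S_1$ and $S_2$, and the current best-extension counter, so total working space is $\Oh(1)$. I expect the main technical hurdle to be the periodic case of Phase~2: verifying that $p$-runs of $S_2$ can be detected and their boundaries found in amortized $\Oh(1)$ per run while the matcher is scanning, that the maximal $p$-run of $S_1$ around $i$ can be discovered in time charged to its length, and that the closed-form expression for the best extension inside an aligned run pair correctly accounts for the characters just outside the runs.
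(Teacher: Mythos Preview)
Your Phase~2 assertion that ``the maximum extension length equals $L$'' is not justified when $L$ is close to $\ell^*$. With pattern length $\ell^*$ and step $d=\lceil \ell^*/3\rceil$, an occurrence $S_1[a\dd a+L-1]$ of the LCS contains a sampled $P_i$ only if the interval $[a\dd a+L-\ell^*]$ (of length $L-\ell^*+1$) meets a sample; this is guaranteed only when $L-\ell^*+1\ge d$, i.e.\ $L\ge \ell^*+d-1$, whereas all you know is $L\ge \ell^*$. Concretely, take $S_1=\texttt{ABCDEFGHIJXYPQRSTUVW}$ (twenty distinct characters) and $S_2=\texttt{aBCDEFGHIJbPQRSTUVWc}$ with fresh $\texttt{a},\texttt{b},\texttt{c}$. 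Here $L=9$ via \texttt{BCDEFGHIJ}. Phase~1 at $\ell=16$ fails, while at $\ell=8$ (step $3$, samples $1,4,7,10,13$) only $P_{13}=\texttt{PQRSTUVW}$ matches, so $\ell^*=8$. In Phase~2 the unique match extends to length $8$, and the LCS is missed entirely.

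The repair is standard: in Phase~2, shorten the pattern relative to the step so that \emph{every} length-$\ge\ell^*$ substring must contain a sampled pattern (e.g.\ pattern length $\lfloor 2\ell^*/3\rfloor$ with step $\lceil \ell^*/3\rceil$, or the paper's choice of length $\tfrac35\ell$ with step $\tfrac15\ell$). With that fix your outline is sound and in fact quite close to the paper's argument: the paper likewise performs a geometric search over $\ell$, uses constant-space pattern matching to plant $\Oh(1)$ anchors per block in the non-periodic case, and then naively extends each anchor pair. The substantive divergence is your handling of the periodic case. Instead of your streaming $p$-run aggregation plus a ``closed-form'' extension inside aligned runs (which, as you note, still needs the boundary-character case analysis worked out), the paper precomputes $\Oh(1)$ anchors per $(\tfrac35\ell,\tfrac15\ell)$-run---the two run endpoints and the first two occurrences of the run's Lyndon root---and then treats those anchors uniformly with the non-periodic ones. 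That buys a cleaner correctness proof (one lemma per anchor type) at no asymptotic cost; your approach would avoid the Lyndon-root machinery but requires the careful amortization you already anticipate.
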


As a step towards our main result, we solve the \LCSl problem defined below.

\defproblem{\textsc{Longest Common Substring with Threshold $\ell$} (\LCSl)}{%
	Two strings $S_1$ and $S_2$ (of length at most $n$), an integer threshold $\ell\in [n]$}{
	A common substring $G$ of $S_1$ and $S_2$ such that:
	\begin{enumerate}
		\item if $\ell \le \lcs(S_1,S_2)< 2\ell$, then $|G| = \lcs(S_1,S_2)$;
		\item if $\lcs(S_1,S_2)\ge 2\ell$, then $|G|\ge 2\ell$.
	\end{enumerate}
}

If $\lcs(S_1,S_2)<\ell$, then \LCSl allows for an arbitrary common substring in the output.
\begin{remark}\label{rem:equiv}
	Note the following equivalent characterization of the output $G$ of \LCSl:
	for every common substring $T$ with $\ell \le |T| \le 2\ell$,
	the common substring $G$ is of length $|G|\ge |T|$.
\end{remark}

\begin{restatable}{theorem}{thmbaseresult}\label{thm:base_improved}
	The \LCSl problem can be solved deterministically in $\Oh(\frac{n\log^* n}{\ell})$ space and $\Oh(n \log n)$ time, and in $\Oh(\frac{n}{\ell})$ space and $\Oh(n\log n)$ time with high probability using a Las-Vegas randomized algorithm.
\end{restatable}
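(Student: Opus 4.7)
The plan is to reduce \LCSl to anchor matching via a string synchronizing set. Using the space-efficient locally consistent parsing of Birenzwige et al., I would construct, for each input string $S_i$, a $\tau$-synchronizing set $\A_i$ with $\tau = \Theta(\ell)$: a collection of $\Oh(|S_i|/\ell)$ anchor positions such that (i) every length-$3\ell$ window of $S_i$ that avoids runs of period at most $\ell/3$ contains at least one anchor, and (ii) whether a position belongs to $\A_i$ depends only on a length-$\Theta(\ell)$ context around it. Property (ii) --- the \emph{synchronizing property} --- implies that any two equal length-at-least-$3\ell$ substrings, one in $S_1$ and one in $S_2$, neither trapped inside a run of short period, carry anchors at the same relative offsets. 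The construction runs in $\Oh(n\log n)$ time using $\Oh(n\log^* n /\ell)$ space deterministically, and in $\Oh(n/\ell)$ space with high probability when Karp--Rabin fingerprints are allowed; this is where the $\log^* n$ factor appears in the deterministic bound.

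Given $\A_1$ and $\A_2$, the next step is to attach to each anchor a canonical signature encoding its length-$2\ell$ context, sort the $\Oh(n/\ell)$ signatures from $\A_1$, and look up each signature from $\A_2$ to produce candidate matching pairs. Each pair is then extended to the left and right using a space-efficient LCE data structure built on top of the same parsing, again within the stated time and space budgets. By the synchronizing property, every common substring of length at least $2\ell$ whose occurrences in both strings avoid short-period runs is witnessed by a matching pair, so the longest such substring is recovered and immediately satisfies condition~(2) of \LCSl; running an analogous procedure with length-$\ell$ contexts captures the shorter range $\ell \le |G| < 2\ell$ needed for condition~(1).

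The main obstacle is the \emph{periodic case}: common substrings whose occurrences sit entirely inside runs of period at most $\ell/3$, where no anchor need fall. To handle this, the plan is to enumerate in a streaming fashion all maximal runs of period at most $\ell$ and length at least $2\ell$ in $S_1$ and $S_2$; because each such run occupies at least $2\ell$ positions, there are $\Oh(n/\ell)$ of them per string, so they can be stored within the space budget. Each such run is represented by a signature of its length-at-most-$\ell$ Lyndon root together with its exponent and its alignment (the rotation offset). Matching runs of $S_1$ against those of $S_2$ by these Lyndon-root signatures yields, for each matching pair, a candidate periodic common substring whose length is determined from the two run parameters by a standard periodicity argument on matching rotations and truncations. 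The algorithm returns the longer of the candidates produced by the synchronizing phase and the periodic phase; correctness follows because every common substring of length at least $2\ell$ falls into one of the two cases, and the total running time remains $\Oh(n\log n)$.
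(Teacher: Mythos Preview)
Your high-level decomposition matches the paper's: use a locally consistent parsing (the paper calls them \emph{partitioning sets}, essentially the same as synchronizing sets) to place $\Oh(n/\ell)$ anchors that capture every non-periodic common substring of length at least $\ell$, and handle the periodic case separately via runs and Lyndon roots. The $\log^* n$ overhead in the deterministic bound indeed comes from the deterministic partitioning-set construction of Birenzwige et al., exactly as you say.

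The gap is in the matching step. You propose to attach to each anchor a signature of its length-$\Theta(\ell)$ context, sort the signatures of $\A_1$, look up those of $\A_2$, and then extend each resulting candidate pair by LCE. But nothing bounds the number of candidate pairs: a single signature can be shared by $\Theta(n/\ell)$ anchors on each side without forcing the relevant windows into a short-period run, so enumerating matching pairs can cost $\Theta((n/\ell)^2)$, which exceeds $n\log n$ whenever $\ell = o(\sqrt{n})$. Conversely, if you keep only one match per lookup you may discard the pair that extends the farthest. The same issue recurs in your periodic phase, where ``matching runs by Lyndon-root signature'' can again produce quadratically many pairs; picking the longest run per root is not enough, because the optimal periodic common substring depends on the rotation offsets at both ends, not just on lengths.

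The paper avoids this by \emph{not} trying to enumerate matching pairs. It treats both the partitioning-set anchors and the run-based anchors uniformly as an instance of the \textsc{Longest Common Anchored Substring} problem: find $\max_{(p_1,p_2)\in A_1\times A_2}\bigl(|\LCP(S_1[p_1\dd],S_2[p_2\dd])|+|\LCP^r(S_1[\dd p_1-1],S_2[\dd p_2-1])|\bigr)$. This is solved in $\Oh(n\log n)$ time and $\Oh(|A_1|+|A_2|+\log n)$ space by building a \emph{sparse suffix tree} over the $\Oh(n/\ell)$ anchor suffixes and reversed prefixes, and then invoking the \textsc{Two String Families LCP} algorithm of Charalampopoulos et al.\ (originating in Crochemore et al.\ and Flouri et al.). That reduction is the missing ingredient in your plan; signature hashing plus per-pair LCE extension does not substitute for it. Finally, the theorem as stated removes the residual $\Oh(\log n)$ space term by switching to a constant-space pattern-matching routine when $\ell \ge n/\log n$; your proposal does not account for this boundary case either.
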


\subsection{Related work}
The LCS problem has been studied in many other settings. 
Babenko and Starikovskaya~\cite{DBLP:journals/poit/BabenkoS11}, Flouri et al.~\cite{combinedTreeFlouri:2015}, Thankachan et al.~\cite{DBLP:journals/jcb/ThankachanAA16}, and Kociumaka et al.~\cite{approxMismatchKociumaka:2019}  studied the LCS with $k$ mismatches problem, where the occurrences in $S_1$ and $S_2$ can be at Hamming distance up to $k$. Charalampopoulos et al.~\cite{charalampopoulos:2018} showed that this problem becomes easier when the strings have a long common substring with $k$ mismatches (similarly to what we obtain for LCS with no mismatches).
Thankachan et al.~\cite{DBLP:conf/recomb/ThankachanACA18} and Ayad et al.~\cite{errorsAyad:2018} considered a related problem with edit distance instead of the Hamming distance. Alzamel et al.~\cite{lyndonAlzamel:2019}
proposed an $\OhTilde(n)$-time algorithm for the Longest Common Circular Substring problem, where occurrences in $S_1$ and $S_2$ can be cyclic rotations of each other.

Amir et al.~\cite{DBLP:conf/spire/AmirCIPR17} studied the problem of answering queries asking for the LCS after a single edit in either of the two original input strings. Subsequently, Amir et al.~\cite{fullyDynamicAmir:2019} 
and Charalampopoulos et al.~\cite{DBLP:conf/icalp/CharalampopoulosGP20} considered a fully dynamic version of the problem, in which the edit operations are applied sequentially, ultimately achieving $\OhTilde(1)$ time per operation.

\subsection{Algorithmic Overview}
We first give an overview of the algorithm of \cref{thm:base_improved}.
Then, we derive \cref{thm:mainresult} in two steps, with an $\Oh(s)$-space solution to the \LCSl problem as an intermediate result.

\subparagraph{An $\OhTilde(n/\ell)$-space algorithm for the \LCSl problem.} 
In \cref{sec:anchored}, we define an \emph{anchored} variant of the \textsc{Longest Common Substring} problem (LCAS). 
In the LCAS problem, we are given two strings $S_1$, $S_2$ and sets of positions $A_1$ and $A_2$, and we wish to find a longest common substring which can be obtained by extending (to the left and to the right) $S_1[p_1]$ and $S_2[p_2]$ for some $(p_1,p_2)\in A_1 \times A_2$.
We then reduce the LCAS problem to the \textsc{Two String Families LCP} problem, introduced by Charalampopoulos et al.~\cite{charalampopoulos:2018} in the context of finding LCS with mismatches.

In \cref{sec:choosingAnchors}, we show how to solve the \LCSl problem by selecting positions in $A_1$ and $A_2$ so that \emph{every} common substring $T$ of $S_1$ and $S_2$ with $|T| \ge \ell$ can be obtained by extending $S_1[p_1]$ and $S_2[p_2]$ for some $(p_1,p_2)\in A_1 \times A_2$.
To make this selection, we use \emph{partitioning sets} by Birenzwige et al.~\cite{locallyConsistentBGP:2018}, which consist of $\OhTilde(\frac{n}{\ell})$ positions chosen in a locally consistent manner. However, since partitioning sets do not select positions in long periodic regions, our algorithms use \emph{maximal repetitions} (runs)~\cite{runsKolpakov:1999,runsBannai:2017} and their \emph{Lyndon roots}~\cite{lyndonCrochemore:2012} to add $\Oh(\frac{n}{\ell})$ extra positions. Overall, we get an $\OhTilde(\frac{n}{\ell})$-space and $\OhTilde(n)$-time algorithm for the \LCSl problem.

\subparagraph{An $\Oh(s)$ space algorithm for the \LCSl problem.} 
In \cref{sec:knownLenAlgo}, we give a time-space tradeoff for the \LCSl problem.
The algorithm partitions the input strings into overlapping substrings, executes the algorithm of \cref{sec:choosingAnchors} for each pair of substrings, and returns the longest among the common substrings obtained from these calls. For a tradeoff parameter $1\le s\le n$, the algorithm takes $\Oh(s)$ space and $\OhTilde(\frac{n^2}{s \cdot \ell} + n)$ time.

\subparagraph{A solution to the LCS problem.} In \cref{sec:finalAlgo}, we show how to search for LCS by repeatedly solving the \LCSl problem with different choices of $\ell$. We get an algorithm that takes $\Oh(s)$ space and $\OhTilde(\frac{n^2}{s \cdot L} + n)$ time, where $L=\lcs(S_1,S_2)$, as stated in \cref{thm:mainresult}.

\section{Preliminaries}\label{sec:preliminaries}
For $1\leq i<j\leq n$, denote the integer \emph{intervals} $[i\dd j] = \{i,i+1,\dots,j\}$ and $[k]=\{1,2,\dots,k\}$.

A string $S$ of length $n=|S|$ is a finite sequence of characters $S[1]S[2]\cdots S[n]$ over an alphabet $\Sigma
$; in this paper, we consider polynomially-bounded integer alphabets $\Sigma=[1\dd n^{\Oh(1)}]$.
The string $S^r = S[n]S[n-1]\cdots S[1]$ is called the \emph{reverse} of the string $S$.

A string $T$ is a \emph{substring} of a string $S$ if $T=S[x]S[x+1]\cdots S[y]$
for some $1\le x \le y \le |S|$. We then say that $T$ \emph{occurs} in $S$ at position $x$,
and we denote the \emph{occurrence} by $S[x\dd y]$.
We call $S[x\dd y]$ a \emph{fragment} of $S$.
A fragment $S[x\dd y]$ is a \emph{prefix} of $S$ if $x=1$ and a \emph{suffix} of $S$ if $y=|S|$.
These special fragments are also denoted by $S[\dd y]$ and $S[x\dd]$, respectively.
A \emph{proper fragment} of $S$ is any fragment other than $S[1\dd |S|]$.
A common prefix (suffix) of two strings $S_1,S_2$ is a string that occurs as a prefix (resp. suffix)
of both $S_1$ and $S_2$.
The longest common prefix of $S_1$ and $S_2$ is denoted by $\LCP(S_1,S_2)$, and the longest common suffix is denoted by $\LCP^r(S_1,S_2)$. Note that $\LCP^r(S_1,S_2) = (LCP(S_1^r,S_2^r))^r$.

An integer $k\in [|S|]$, is a \emph{period} of a string $S$ if $S[i]=S[i+k]$ for $i\in [|S|-k]$. 
The shortest period of $S$ is denoted by $\per(S)$. 
If $\per(S) \leq \frac12|S|$, we say that $S$ is \textit{periodic}.
A periodic fragment $S[i\dd j]$ is called a \emph{run}~\cite{runsKolpakov:1999,runsBannai:2017} if it cannot be extended (to the left nor to the right)
without increasing the shortest period.
For a pair of parameters $d$ and $\rho$, we say that a run $S[i\dd j]$ is a \emph{$(d,\rho)$-run} if $|S[i\dd j]|\ge d$ and $\per(S[i\dd j])\le \rho$. Note that every periodic fragment $S[i'\dd j']$ with $|S[i'\dd j']|\ge d$ and $\per(S[i'\dd j'])\le \rho$ can be uniquely extended to a $(d,\rho)$-run $S[i\dd j]$ while preserving the shortest period $\per(S[i\dd j])=\per(S[i'\dd j'])$.

\subparagraph{Tries and suffix trees.} Given a set of strings $\F$, the compact trie~\cite{morrison:1968} of these strings is the tree obtained by compressing each path of nodes of degree one in the trie~\cite{briandais:1959,fredkin:1960} of the strings in $\F$, which takes $\Oh(|\F|)$ space. Each edge in the compact trie has a label represented as a fragment of a string in $\F$. The suffix tree~\cite{WeinerSTconstruct} of a string $S$ is the compact trie of all the suffixes of $S$.
The sparse suffix tree~\cite{sstKarkkainen:1996,DBLP:journals/talg/Bille0GKSV16,DBLP:conf/stacs/IKK14,DBLP:conf/soda/GawrychowskiK17} of a string $S$ is the compact trie of selected suffixes $\{S[i\dd ] : i \in B\}$
specified by a set of positions $B\sub [|S|]$.

\section{Longest Common Anchored Substring problem}\label{sec:anchored}
In this section, we consider an \emph{anchored} variant of the \textsc{Longest Common Substring} problem.
Let $A_1$ and $A_2$ be sets of distinguished positions,  called \emph{anchors}, in strings $S_1$ and $S_2$, respectively.
We say that a string $T$ is a \emph{common anchored substring} of $S_1$ and $S_2$ with respect to $A_1$ and $A_2$ if it has occurrences $S_1[i_1\dd j_1]=T=S_2[i_2\dd j_2]$ with a \emph{synchronized pair of anchors}, i.e., with some anchors $p_1\in A_1$ and $p_2\in A_2$
such that $p_1 - i_1 = p_2 - i_2 \in [0,|T|]$.\footnote{Note that the anchors could be at positions $p_1=j_1+1$ and $p_2 = j_2+1$ (if $p_1-i_1 = p_2-i_2 = |T|$).}

\defproblem{\textsc{Longest Common Anchored Substring (LCAS)}}{%
	Two strings $S_1$, $S_2$ (of length at most $n$) and two sets of anchors $A_1\sub [|S_1|]$, $A_2\sub [|S_2|]$.
}{%
	A longest common anchored substring of $S_1$ and $S_2$ with respect to $A_1,A_2$.
}

We utilize the following characterization of the longest common anchored substring.
\begin{fact}\label{fct:lcas}
	The length of a longest common anchored substring of two strings $S_1$ and $S_2$ (with respect to anchors at $A_1$ and $A_2$) is
	\[
	\max\{|\LCP(S_1[p_1\dd],S_2[p_2\dd])|+|\LCP^r(S_1[\dd p_1-1],S_2[\dd p_2-1])| :  p_1\in A_1, p_2\in A_2\}.
	\]
	Moreover, for any $(p_1,p_2)\in A_1\times A_2$ maximizing this expression, a longest common anchored substring
	is the concatenation $\LCP^r(S_1[\dd p_1-1],S_2[\dd p_2-1])\cdot \LCP(S_1[p_1\dd],S_2[p_2\dd])$.
\end{fact}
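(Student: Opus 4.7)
The plan is to prove both inequalities between the length of a longest common anchored substring and the displayed maximum, and to extract the explicit witness from the lower-bound direction.

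For the upper bound, I would start from an arbitrary common anchored substring $T$ witnessed by occurrences $S_1[i_1\dd j_1]=T=S_2[i_2\dd j_2]$ together with a synchronized anchor pair $(p_1,p_2)\in A_1\times A_2$ satisfying $d:=p_1-i_1=p_2-i_2\in[0,|T|]$. The anchor naturally splits $T$ into a prefix $T[1\dd d]$ and a suffix $T[d+1\dd |T|]$. By the position of the anchor inside each occurrence, $T[1\dd d]$ is simultaneously a suffix of $S_1[\dd p_1-1]$ and of $S_2[\dd p_2-1]$, so $d\le |\LCP^r(S_1[\dd p_1-1],S_2[\dd p_2-1])|$. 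Symmetrically, $T[d+1\dd |T|]$ is a common prefix of $S_1[p_1\dd]$ and $S_2[p_2\dd]$, hence has length at most $|\LCP(S_1[p_1\dd],S_2[p_2\dd])|$. Adding the two inequalities bounds $|T|$ by the expression associated to $(p_1,p_2)$, and thus by the overall maximum.

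For the lower bound and the ``moreover'' clause, I would fix any pair $(p_1,p_2)\in A_1\times A_2$ and write $u:=|\LCP^r(S_1[\dd p_1-1],S_2[\dd p_2-1])|$ and $v:=|\LCP(S_1[p_1\dd],S_2[p_2\dd])|$. By the definition of $\LCP^r$, the length-$u$ suffix $S_1[p_1-u\dd p_1-1]$ equals the length-$u$ suffix $S_2[p_2-u\dd p_2-1]$; by the definition of $\LCP$, the length-$v$ prefix $S_1[p_1\dd p_1+v-1]$ equals $S_2[p_2\dd p_2+v-1]$. Concatenating, the string $U:=\LCP^r(S_1[\dd p_1-1],S_2[\dd p_2-1])\cdot \LCP(S_1[p_1\dd],S_2[p_2\dd])$ occurs as $S_1[p_1-u\dd p_1+v-1]$ and as $S_2[p_2-u\dd p_2+v-1]$, where in both occurrences the anchor sits at offset $u\in[0,|U|]$ from the left end. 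Thus $U$ is a common anchored substring of length $u+v$, which is exactly the value of the expression at $(p_1,p_2)$. Maximizing over pairs yields the matching lower bound as well as the explicit witness advertised in the statement.

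I expect no real obstacle beyond careful bookkeeping at the boundaries. The only edge cases are $d=0$ and $d=|T|$ (the anchor lies at the very left or immediately past the right end of $T$), and the mild degeneracies $u=0$ or $v=0$ in the witness construction; in all of these, one of the two factors is the empty string, and the argument above goes through unchanged because both $\LCP^r(\cdot,\cdot)$ and $\LCP(\cdot,\cdot)$ are well-defined (possibly empty) strings.
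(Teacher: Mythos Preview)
Your argument is correct and is exactly the natural two-direction verification one would expect. The paper itself does not supply a proof of this fact---it is stated as a ``Fact'' and immediately used---so your write-up simply makes explicit the straightforward bookkeeping the authors left implicit.
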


The characterization of \cref{fct:lcas} lets us use the following problem, originally defined
in a context of computing LCS with mismatches.

\defproblem{\textsc{Two String Families LCP} (Charalampopoulos et al.~\cite{charalampopoulos:2018})}{%
	A compact trie $\T(\F)$ of a family of strings $\F$ and two sets $\P,\Q\sub \F^2$.
}{%
	The value $\maxPairLCP(\P,\Q)$, defined as
	\[\maxPairLCP(\P,\Q)=\max\{|\LCP(P_1,Q_1)|+|\LCP(P_2,Q_2)| : (P_1,P_2)\in \P \text{, }(Q_1,Q_2)\in \Q\},\]
	along with pairs $(P_1,P_2)\in \P$ and $(Q_1,Q_2)\in \Q$ for which the maximum is attained.
}

Charalampopoulos et al.~\cite{charalampopoulos:2018} observed that the \textsc{Two String Families LCP} problem
can be solved using an approach by Crochemore et al.~\cite{combinedTreeCrochemore:2006} and Flouri et al.~\cite{combinedTreeFlouri:2015}.

\begin{lemma}[{\cite[Lemma 3]{charalampopoulos:2018}}]\label{lem:twoStringLCP}
	The \textsc{Two String Families LCP} problem can be solved in $\Oh(|\F|+N\log N)$ time using $\Oh(|\F|+N)$ space%
	\footnote{The original formulation of~\cite[Lemma 3]{charalampopoulos:2018} does not discuss the space complexity.
	However, an inspection of the underlying algorithm, described in~\cite{combinedTreeCrochemore:2006,combinedTreeFlouri:2015}, easily yields this additional claim.}, where $N=|\P|+|\Q|$.
\end{lemma}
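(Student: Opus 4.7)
The plan is to follow the approach of Crochemore et al.\ and Flouri et al., taking care to keep the space bounded as claimed. First, I would preprocess $\T(\F)$ for constant-time LCA queries in $\Oh(|\F|)$ time and space; since $|\LCP(P,Q)|$ equals the string-depth of the LCA of the leaves corresponding to $P$ and $Q$ in $\T(\F)$, this gives $\Oh(1)$-time evaluation of any single term of the sought maximum. Next, I would restrict attention to the at most $2N$ leaves of $\T(\F)$ that actually occur as a coordinate of some pair in $\P\cup\Q$, and build the \emph{auxiliary tree} $\T'$ obtained by keeping only these leaves together with their pairwise LCAs. Using the LCA structure together with a sort by DFS-preorder, $\T'$ can be produced in $\Oh(|\F|+N)$ time, has $\Oh(N)$ nodes, and preserves LCA-depths with respect to $\T(\F)$.

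I would then apply heavy-path decomposition to $\T'$. For any two leaves $P_1,Q_1$ of $\T'$, their LCA lies on exactly one heavy path, namely the deepest one visited by both root-to-leaf paths; on that path, $|\LCP(P_1,Q_1)|$ equals the minimum of the two depths at which $P_1$ and $Q_1$ attach. For each heavy path $h$ I collect every $\P$-pair and every $\Q$-pair whose first coordinate enters $h$, then sort these events by attachment depth and sweep. Because each root-to-leaf path crosses $\Oh(\log N)$ heavy paths, the total number of events across all sweeps is $\Oh(N\log N)$.

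During the sweep along $h$, when an event of entry depth $d$ for a $\P$-pair $(P_1,P_2)$ is processed, the quantity we need to maximize is $d + |\LCP(P_2,Q_2)|$ over the already-inserted $\Q$-pairs whose attachment depth is at least $d$ (and symmetrically with the roles of $\P$ and $\Q$ exchanged). The main technical ingredient is therefore a data structure over $\T'$ supporting two operations: \emph{insert} a leaf (a second coordinate of a $\Q$-pair) and, given a query leaf (a second coordinate of a $\P$-pair), return the maximum LCA-depth with any currently inserted leaf. I would implement this with a segment tree indexed by the DFS-preorder of $\T'$: insertion activates the interval corresponding to the leaf's subtree path, and a query finds the deepest activated ancestor of the query leaf. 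Both operations run in $\Oh(\log N)$ time, giving overall $\Oh(N\log N)$ work for the sweeps and the stated $\Oh(|\F|+N\log N)$ total.

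The principal obstacle is matching the $\Oh(|\F|+N)$ space bound while the algorithm conceptually juggles many sweeps. I would address this by processing heavy paths one at a time and each path in two passes (one orienting $\P$-queries against inserted $\Q$-leaves, one symmetric), discarding all per-sweep scratch arrays when the path is done, so that at any moment only the LCA structure on $\T(\F)$ (size $\Oh(|\F|)$), the auxiliary tree $\T'$ with its heavy-path decomposition (size $\Oh(N)$), and the single segment tree over $\T'$ (size $\Oh(N)$) are alive. Correctness follows because the LCA of any $(P_1,Q_1)$ belongs to exactly one heavy path, and the sweep there visits the correct pairing of attachment depths; an analogous argument handles $(P_2,Q_2)$ through the segment-tree queries.
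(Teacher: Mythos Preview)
The paper does not prove this lemma; it is quoted verbatim from~\cite{charalampopoulos:2018}, with only a footnote remarking that the space bound follows by inspection of the algorithm in~\cite{combinedTreeCrochemore:2006,combinedTreeFlouri:2015}. So there is no ``paper's own proof'' to compare against.

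Your sketch is a faithful reconstruction of the Crochemore et al./Flouri et al.\ technique and is essentially correct. The reduction to an $\Oh(N)$-size auxiliary tree $\T'$, the heavy-path decomposition, the observation that $|\LCP(P_1,Q_1)|$ on the heavy path containing the LCA equals the shallower of the two attachment (string-)depths, and the two-pass sweep with a segment-tree-backed ``deepest marked ancestor'' query on the second coordinates are all standard and sound. Two small points worth tightening: (i) when $P_1$ and $Q_1$ attach at the \emph{same} node of $h$ and leave through the \emph{same} light child, the sweep on $h$ reports a value that \emph{underestimates} $|\LCP(P_1,Q_1)|$; this is harmless because the correct value is picked up on the deeper heavy path actually containing their LCA, but you should say so explicitly; (ii) the need for two symmetric passes stems precisely from the case $d_Q<d_P$ (so that the $Q$-event has not yet been inserted when the $P$-event is processed in the decreasing-depth sweep), which you mention but could motivate more directly. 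The space argument---keeping only the LCA structure on $\T(\F)$, the $\Oh(N)$-size $\T'$ with its decomposition, and one $\Oh(N)$-size segment tree alive at any time---is exactly the inspection the footnote alludes to.
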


By \cref{fct:lcas}, the \textsc{LCAS} problem reduces to the \textsc{Two String Families LCP} problem~with:
\begin{align}
\begin{split}\label{eq:lcasreductF}
\F&=\{S_1[p\dd ] : p \in A_1\} \cup \{(S_1[\dd p-1])^r : p \in A_1\} \\ &\qquad \cup \{S_2[p\dd ] : p \in A_2\} \cup \{(S_2[\dd p-1])^r : p\in A_2\},
\end{split}\\
\P&=\{(S_1[p\dd ],(S_1[\dd p-1])^r) : p \in A_1\},\label{eq:lcasreductP}\\
\Q&=\{(S_2[p\dd ],(S_2[\dd p-1])^r) : p \in A_2\}.\label{eq:lcasreductQ}
\end{align}

The following theorem provides an efficient implementation of this reduction.
The most challenging step, to construct the compacted trie $\T(\F)$, is delegated to the work of Birenzwige et al.~\cite{locallyConsistentBGP:2018}, who show that a sparse suffix tree of a length-$n$ string $S$ with $B\subseteq[n]$ can be constructed deterministically in $\Oh(n\log\frac{n}{|B|})$ time and $\Oh(|B| + \log n)$ space.
\begin{theorem}\label{thm:lcas}
	The \textsc{Longest Common Anchored Substring} problem can be solved in $\Oh(n\log n)$ time using $\Oh(|A_1|+|A_2| + \log n)$ space.
\end{theorem}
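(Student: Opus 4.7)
The plan is to implement the reduction spelled out in \eqref{eq:lcasreductF}--\eqref{eq:lcasreductQ} and invoke \cref{lem:twoStringLCP}, leaving the construction of the compacted trie $\T(\F)$ as the only nontrivial ingredient. By \cref{fct:lcas}, any witness pair $((P_1,P_2),(Q_1,Q_2))$ returned by the \textsc{Two String Families LCP} algorithm, together with the values $|\LCP(P_1,Q_1)|$ and $|\LCP(P_2,Q_2)|$, immediately identifies the corresponding anchors $p_1\in A_1$, $p_2\in A_2$ and lets us report an occurrence of a longest common anchored substring in $\Oh(1)$ extra words. Since $|\F|\le 2(|A_1|+|A_2|)$ and $N=|\P|+|\Q|=|A_1|+|A_2|$, \cref{lem:twoStringLCP} handles this phase in $\Oh((|A_1|+|A_2|)\log n)$ time and $\Oh(|A_1|+|A_2|)$ working space once $\T(\F)$ is available.

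The main obstacle is that the strings in $\F$ have total length $\Omega(n\cdot(|A_1|+|A_2|))$ and therefore cannot be materialized within the $\Oh(|A_1|+|A_2|+\log n)$ space budget. To sidestep this, I would view $\F$ as a family of suffixes of a single virtual string $S = S_1\cdot\$_1\cdot S_1^r\cdot\$_2\cdot S_2\cdot\$_3\cdot S_2^r\cdot\$_4$ of length $\Oh(n)$, where each $\$_i$ is a fresh sentinel so that no common prefix spuriously crosses a block boundary. Any character $S[i]$ can be produced in $\Oh(1)$ time from the read-only input (reading $S_k^r[j]$ as $S_k[|S_k|-j+1]$), so $S$ itself may be treated as a read-only length-$\Oh(n)$ string. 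Each element of $\F$ then equals $S[i\dd]$ for an explicit position $i$, and with $B$ being the set of these $\Oh(|A_1|+|A_2|)$ positions, the trie $\T(\F)$ coincides with the sparse suffix tree of $S$ selected by $B$.

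I would then invoke the deterministic sparse suffix tree construction of Birenzwige et al.~\cite{locallyConsistentBGP:2018} to build $\T(\F)$ in $\Oh(n\log(n/|B|)) = \Oh(n\log n)$ time and $\Oh(|B|+\log n) = \Oh(|A_1|+|A_2|+\log n)$ space, and feed the result into \cref{lem:twoStringLCP}. The hard part is really just the bookkeeping: one must check that the edge labels of $\T(\F)$, stored as fragments of the virtual $S$, still provide $\Oh(1)$-time character access for the algorithm underlying \cref{lem:twoStringLCP}, and this holds because each such fragment lies inside a single block of $S$ and hence inside one of $S_1, S_1^r, S_2, S_2^r$. Summing the two phases yields the claimed $\Oh(n\log n)$ time and $\Oh(|A_1|+|A_2|+\log n)$ space.
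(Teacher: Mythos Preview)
Your proposal is correct and follows essentially the same approach as the paper: build a sparse suffix tree of the concatenation $S_1\$S_1^r\$S_2\$S_2^r$ using the $\Oh(n\log\frac{n}{|B|})$-time, $\Oh(|B|+\log n)$-space construction of Birenzwige et al., and feed it into \cref{lem:twoStringLCP}. The only cosmetic difference is that the paper uses a single sentinel and then trims the tree at $\$$'s, whereas you use four distinct sentinels so that the sparse suffix tree already has the correct branching structure; either way works and the complexity analysis is identical.
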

\begin{proof}
	We implicitly create a string $S=S_1\$S_1^r\$S_2\$S_2^r$ and construct a sparse suffix tree of $S$
	containing the following suffixes: $S_1[p\dd ]\$S_1^r\$S_2\$S_2^r$ and
	$(S_1[\dd p-1])^r\$ S_2\$ S_2^r$ for $p\in A_1$, as well as $S_2[p\dd ]\$ S_2^r$ and
	$(S_2[\dd p-1])^r$ for $p\in A_2$. We then trim this tree, cutting edges immediately above any $\$$ on their labels, which results in the compacted trie $\T(\F)$ for the family $\F$ defined in~\eqref{eq:lcasreductF}.	
	We then build $\P$ and $\Q$ according to~\eqref{eq:lcasreductP} and~\eqref{eq:lcasreductQ}, respectively,
	and solve an instance of the \textsc{Two String Families LCP} problem specified by $\T(\F),\P,\Q$.
	This yields pairs in $\P$ and $\Q$ for which $\maxPairLCP(\P,\Q)$ is attained.
	We retrieve the underlying indices $p_1\in A_1$ and $p_2\in A_2$ and derive a longest common anchored substring of $S_1$ and $S_2$ according to \cref{fct:lcas}: $\LCP^r(S_1[\dd p_1-1],S_2[\dd p_2-1])\cdot \LCP(S_1[p_1\dd],S_2[p_2\dd])$.
	
	With the sparse suffix tree construction of \cite{locallyConsistentBGP:2018}
	and the algorithm of \cref{lem:twoStringLCP} that solves the \textsc{Two String Families LCP} problem,
	the overall running time is $\Oh(n\log \frac{n}{N} + N\log N)=\Oh(n \log n)$ and the space complexity is $\Oh(N + \log n)$, where $N=|A_1|+|A_2|$. 
\end{proof}

\section{Space-efficient $\OhTilde(n)$-time algorithm for the \LCSl problem}\label{sec:choosingAnchors}
Our approach to solve the \LCSl problem is via a reduction to the \textsc{LCAS} problem.
For this, we wish to select anchors $A_1\sub [|S_1|]$ and $A_2\sub [|S_2|]$ so that every common substring $T$ of length at least $\ell$ is a common anchored substring. In other words, we need to make sure that $T$ admits occurrences $S_1[i_1\dd j_1]=T=S_2[i_2\dd j_2]$ with a synchronized pair of anchors.

As a warm-up, we describe a simple selection of $\Oh(n/\sqrt{\ell})$ anchors based on \emph{difference covers}~\cite{DBLP:journals/tocs/Maekawa85}, which have already been used by Starikovskaya and Vildh{\o}j~\cite{diffCover:2013} in a time-space tradeoff for the LCS problem. For every two integers $\tau,m$ with $1\le \tau \le m$,
this technique yields a set $DC_\tau(m)\sub [m]$ of size $\Oh(m/\sqrt{\tau})$ such that
for every two indices $i_1,i_2\in [m-\tau+1]$, there is a shift $\Delta\in [0\dd \tau-1]$
such that $i_1+\Delta$ and $i_2+\Delta$ both belong to $DC_\tau(m)$.
Hence, to make sure that every common substring of length at least $\ell$ is anchored,
it suffices to select all $\Oh(n/\sqrt{\ell})$ positions in $DC_{\ell}(n)$ as anchors: $A_1=[|S_1|]\cap DC_{\ell}(n)$ and $A_2 = [|S_2|]\cap DC_{\ell}(n)$.

We remark that such selection of anchors is \emph{non-adaptive}: it does not depend on contents of the strings $S_1$ and $S_2$, but only on the lengths of these strings (and the parameter $\ell$).
In fact, any non-adaptive construction needs $\Omega(n/\sqrt{\ell})$ anchors in order to guarantee that every common substring $T$ of length at least $\ell$ is a common anchored substring.
In the following, we show how adaptivity allows us to achieve the same goal using only $\OhTilde(n/\ell)$ anchors.

\subsection{Selection of Anchors: the non-periodic case}\label{sec:synchronized_position}
We first show how to accommodate
common substrings $T$ of length $|T|\ge \ell$ that do not contain a $(\frac35\ell,\frac15\ell)$-run.
The idea is to use \emph{partitioning sets} by Birenzwige et al.~\cite{locallyConsistentBGP:2018}.
\begin{definition}[Birenzwige et al.~\cite{locallyConsistentBGP:2018}]\label{def:partitioning_set}
	A set of positions $P\subseteq [n]$ is called a \emph{$(\tau,\delta)$-partitioning set} of a length-$n$ string $S$, for some parameters $\tau,\delta \in [n]$, if it has the following properties:
	\begin{description}
		\item[Local Consistency:]
		For every two indices $i,j\in[1+\delta\dd n-\delta]$ such that $S[i-\delta\dd i+\delta]=\linebreak S[j-\delta\dd j+\delta]$, we have $i\in P$ if and only if $j\in P$.
		\item[Compactness:]
		If $p_i<p_{i+1}$ are two consecutive positions in $P\cup\{1,n+1\}$ such that $p_{i+1} > p_i + \tau$,
		then $u=S[p_i\dd p_{i+1}-1]$ is periodic with period $\per(u)\le\tau$. 
	\end{description}
\end{definition}

Note that any $(\tau,\delta)$-partitioning set is also a $(\tau',\delta')$-partitioning set for any $\tau'\ge \tau$ and $\delta'\ge \delta$. 
The selection of anchors is based on an arbitrary $(\frac15\ell,\frac15\ell)$-partitioning set $P$ of the string $S_1S_2$: for every position $p \in P$,
$p$ is added to $A_1$ (if $p\le |S_1|$) or $p-|S_1|$ is added to $A_2$ (otherwise).
Below, we show that this selection satisfies the advertised property.%
\begin{lemma}\label{lemma:findLCSifnotPeriodic}
	Let $T$ be a common substring of length $|T| \geq \ell$
	which does not contain a $(\frac35\ell,\frac15\ell)$-run. Then, $T$ is a common anchored substring with respect to $A_1,A_2$ defined above.
\end{lemma}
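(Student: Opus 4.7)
The plan is to take any pair of occurrences of $T$, find a position of $P$ strictly inside the first occurrence, and transfer it via local consistency to a synchronized anchor in the second occurrence; the hypothesis on runs will only be needed to rule out the alternative scenario via the compactness property of $P$. Fix occurrences $S_1[i_1\dd j_1]=T=S_2[i_2\dd j_2]$, set $\delta=\tau:=\tfrac15\ell$, and write $n=|S_1|+|S_2|$. Inside $S_1S_2$, consider the shrunken windows $R:=[i_1+\delta\dd j_1-\delta]$ and $R':=[|S_1|+i_2+\delta\dd |S_1|+j_2-\delta]$ that correspond to positions of the two occurrences lying at least $\delta$ away from both ends of $T$. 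Because $|T|\ge\ell=5\delta$, both windows have length $|T|-2\delta\ge\tfrac35\ell>0$ and sit strictly inside the $S_1$- and $S_2$-halves of $S_1S_2$, respectively. For any $q\in R$ and its mirror $q':=q-i_1+|S_1|+i_2\in R'$, the $\delta$-neighborhoods $S_1S_2[q-\delta\dd q+\delta]$ and $S_1S_2[q'-\delta\dd q'+\delta]$ both coincide with the same length-$(2\delta+1)$ substring of $T$, so the local consistency of $P$ yields the equivalence $q\in P\iff q'\in P$.

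If some $q\in R$ belongs to $P$, then its mirror $q'$ also belongs to $P$; setting $p_1:=q\in A_1$ and $p_2:=q'-|S_1|\in A_2$ gives a synchronized anchor pair with common offset $p_1-i_1=p_2-i_2\in[\delta\dd |T|-1-\delta]\subseteq[0\dd |T|]$, certifying that $T$ is a common anchored substring. Otherwise no element of $P$ falls in $R$; letting $p_k<p_{k+1}$ be the consecutive elements of $P\cup\{1,n+1\}$ with $p_k<i_1+\delta$ and $p_{k+1}>j_1-\delta$, a direct calculation gives $p_{k+1}-p_k\ge |T|-2\delta+1\ge \tfrac35\ell+1>\tau$. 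Compactness then forces $u:=S_1S_2[p_k\dd p_{k+1}-1]$ to have period at most $\tau=\tfrac15\ell$; the intersection of $u$ with the first occurrence $[i_1\dd j_1]$ of $T$ inherits this short period and, by inspecting the four configurations of $p_k,p_{k+1}$ relative to $[i_1\dd j_1]$, has length at least $|T|-2\delta+1>\tfrac35\ell$. Translated into $T$, this yields a fragment of $T$ of length exceeding $\tfrac35\ell$ with period at most $\tfrac15\ell$, which by the unique extension of periodic fragments recalled in the preliminaries extends to a $(\tfrac35\ell,\tfrac15\ell)$-run of $T$---contradicting the hypothesis. Hence the first case must hold, and $T$ is a common anchored substring.

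I expect the main nuisance to be the boundary bookkeeping rather than any deep idea: one has to verify that the $\delta$-neighborhoods used in local consistency really fit inside the two occurrences of $T$ (which is what forces the shrinking by $\delta$ on each side and the condition $|T|\ge 5\delta$), and that the intersection $u\cap [i_1\dd j_1]$ in the no-anchor case stays longer than $\tfrac35\ell$ regardless of whether $p_k$ and $p_{k+1}$ fall inside or outside the occurrence. The constants $\tau=\delta=\tfrac15\ell$ together with the threshold $\tfrac35\ell$ are calibrated precisely so that $|T|-2\delta\ge\tfrac35\ell>\tau$, which simultaneously yields the synchronization in the first case and the desired run in the second.
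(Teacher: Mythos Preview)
Your proof is correct and follows essentially the same approach as the paper: use local consistency of the partitioning set on the $\delta$-shrunken window to transfer an anchor from one occurrence to the other, and in the empty case invoke compactness to exhibit a short-period fragment of length $\ge \tfrac35\ell$ inside $T$, contradicting the no-run hypothesis. The only cosmetic difference is that the paper works directly with $S_1[i_1+\delta\dd j_1-\delta]$ (which already lies inside the block $u$ and has length $|T|-2\delta\ge\tfrac35\ell$), so the ``four configurations'' case split on where $p_k,p_{k+1}$ fall relative to $[i_1\dd j_1]$ is unnecessary.
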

\begin{proof}
	Let $S_1[i_1\dd j_1]$ and $S_2[i_2\dd j_2]$ be arbitrary occurrences of $T$ in $S_1$ and $S_2$,
	respectively.
	If there is a position $p_1\in A_1$ with $p_1 \in [i_1+\delta \dd  j_1-\delta]$,
	then the position $p_2 = i_2 + (p_1-i_1)$ belongs to $A_2$ by the local consistency property of the underlying partitioning set, due to $S_1[p_1-\delta\dd p_1+\delta]=S_2[p_2-\delta\dd p_2+\delta]$.
	Hence, $(p_1,p_2)$ is a synchronized pair of anchors and $T$ is a common anchored substring with respect to $A_1,A_2$.
	
	If there is no such position $p_1\in A_1$, then $S_1[i_1+\delta\dd j_1-\delta]$ is contained within a block between two consecutive positions of the partitioning set. The length of this block is at least $|T|-2\delta \geq \frac35\ell > \tau$,	so the block is periodic by the compactness property of the partitioning set. Hence, $\per(T[1+\delta\dd |T|-\delta])=\per(S_1[i_1+\delta\dd j_1-\delta])\le \tau\le \frac15\ell$.
	A $(\frac35\ell,\frac15\ell)$-run in $T$ can thus be obtained by maximally extending $T[1+\delta\dd |T|-\delta]$ without increasing the shortest period. Such a run in $T$ is a contradiction to the assumption.
\end{proof}

Birenzwige et al.~\cite{locallyConsistentBGP:2018} gave a deterministic algorithm that constructs a $(\tau,\tau\log^*n)$-partitioning set of size $\Oh(\frac{n}{\tau})$ in $\Oh(n\log\tau)$ time using $\Oh(\frac {n}{\tau} + \log \tau)$ space. Setting appropriate $\tau = \Theta(\ell / \log^*n)$, we get an $(\frac15\ell,\frac15\ell)$-partitioning set of size $\Oh(\frac{n\log^* n}{\ell})$.

Furthermore, Birenzwige et al.~\cite{locallyConsistentBGP:2018} gave a Las-Vegas randomized algorithm that constructs a $(\tau,\tau)$-partitioning set of size $\Oh(\frac{n}{\tau})$ in $\Oh(n + \tau \log^2 n)$ time with high probability, using $\Oh(\frac{n}{\tau} + \log n)$ space. Setting $\tau = \frac15\ell$, we get an $(\frac15\ell,\frac15\ell)$-partitioning set of size $\Oh(\frac{n}{\ell})$.

\subsection{Selection of anchors: the periodic case}\label{sec:periodic_case}
In this section, for any parameters $d,\rho\in [n]$ satisfying $d\ge 3\rho-1$, we show how to accommodate all common substrings containing a $(d,\rho)$-run by selecting $\Oh(\frac{n}{d})$ anchors.
This method is then used for $d=\frac35\ell$ and $\rho=\frac15\ell$ to complement the selection in \cref{sec:synchronized_position}.

Let $T$ be a common substring of $S_1$ and $S_2$ containing a $(d,\rho)$-run. 
We consider two cases depending on whether the run is a proper fragment of $T$ or the whole $T$.
In the first case, it suffices to select as anchors the first and the last position of every $(d,\rho)$-run.
\begin{lemma}\label{lemma:addEdgeOfPeriodic}
	Let $A_1$ and $A_2$ contain the boundary positions of every $(d,\rho)$-run in $S_1$ and $S_2$, respectively.
	If $T$ is a common substring of $S_1$ and $S_2$ with a $(d,\rho)$-run $r$ as a proper fragment,
	then $T$ is a common anchored substring of $S_1$ and $S_2$ with respect to $A_1,A_2$.
\end{lemma}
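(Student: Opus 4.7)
My plan is to fix arbitrary occurrences $S_1[i_1\dd j_1]=T=S_2[i_2\dd j_2]$ and write $r=T[a\dd b]$ with $1\le a\le b\le |T|$ and $(a,b)\neq (1,|T|)$, since $r$ is a proper fragment of $T$. The substring $S_1[i_1+a-1\dd i_1+b-1]$ has length $|r|\ge d$ and shortest period equal to $\per(r)\le \rho$; by the uniqueness of extensions stated in the preliminaries, it extends to a unique $(d,\rho)$-run $r_1=S_1[x_1\dd y_1]$ in $S_1$ with $\per(r_1)=\per(r)$, and analogously to a run $r_2=S_2[x_2\dd y_2]$ in $S_2$ with $\per(r_2)=\per(r)$. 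My goal is to locate an endpoint of $r_1$ that lies strictly inside $T$'s occurrence, so that the matched endpoint of $r_2$ provides a synchronized pair of anchors.

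Since $r$ is a proper fragment of $T$, I may assume without loss of generality that $a>1$ (the case $b<|T|$ is fully symmetric). Because $r=T[a\dd b]$ is a run of $T$ viewed as a string, the left extension $T[a-1\dd b]$ has shortest period strictly greater than $\per(r)$. The heart of the argument is to show that this forces $x_1=i_1+a-1$. Supposing towards a contradiction that $x_1\le i_1+a-2$, the fragment $S_1[i_1+a-2\dd i_1+b-1]$ is a subfragment of $r_1$ of length $|r|+1$, and since $|r|+1>\per(r)$ — which follows from $|r|\ge d\ge 3\rho-1\ge 3\per(r)-1$ — this subfragment inherits the period $\per(r_1)=\per(r)$, and hence has shortest period at most $\per(r)$. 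But it equals $T[a-1\dd b]$, whose shortest period exceeds $\per(r)$, a contradiction.

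The same reasoning carried out inside $S_2$ yields $x_2=i_2+a-1$. Since $A_1$ and $A_2$ by hypothesis contain all boundary positions of $(d,\rho)$-runs in $S_1$ and $S_2$, we have $x_1\in A_1$ and $x_2\in A_2$. Finally, $x_1-i_1=x_2-i_2=a-1\in[0,|T|]$, so $(x_1,x_2)$ is a synchronized pair of anchors witnessing that $T$ is a common anchored substring of $S_1$ and $S_2$ with respect to $A_1$ and $A_2$.

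The main obstacle I anticipate is the step carried out in the second paragraph, namely ruling out that the $(d,\rho)$-run $r_1$ in $S_1$ strictly extends past position $i_1+a-1$ to the left. This is precisely where the hypothesis that $r$ is a run of $T$ (and not merely a subfragment of $T$ with adequate period and length) is indispensable, via the strict period inequality for $T[a-1\dd b]$, and where the condition $d\ge 3\rho-1$ guarantees that the extended fragment is long enough for shortest periods to be comparable. Everything else amounts to bookkeeping around the definition of a synchronized pair of anchors.
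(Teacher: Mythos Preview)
Your proof is correct and follows essentially the same route as the paper's: both fix occurrences of $T$, use the assumption $a>1$ (the paper's $i\ne 1$), and argue that the $(d,\rho)$-run in $S_1$ containing the image of $r$ must start exactly at position $i_1+a-1$, yielding synchronized anchors. The paper phrases the key step as the single character inequality $T[a-1]\ne T[a-1+\per(r)]$, whereas you obtain the same conclusion by contradiction via period inheritance; these are equivalent. One minor remark: your appeal to $d\ge 3\rho-1$ to justify $|r|+1>\per(r)$ is unnecessary, since $\per(r)\le |r|$ holds by the definition of a period (and in fact $\per(r)\le \tfrac12|r|$ because $r$ is periodic); the paper's proof does not use this hypothesis at all.
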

\begin{proof}
	In the proof, we assume that $r=T[i\dd j]$ with $i\ne 1$. The case of $j\ne |T|$ is symmetric.
	
	Suppose that an occurrence of $T$ in $S_1$ starts at position $i_1$.
	The fragment matching $r$, i.e, $S_1[i_1+i-1\dd i_1+j-1]$, is periodic, has length at least $d$ and period at most $\rho$, so it can be extended to a $(d,\rho)$-run in $S_1$.
	This run in $S_1$ starts at position $i_1+i-1$ due to $T[i-1]\ne T[i+\per(r)-1]$, so $p_1 := i_1+i-1\in A_1$.
	The same argument shows that $p_2:=i_2+i-1\in A_2$ if $T$ occurs in $S_2$ at position $i_2$.
	Hence, $(p_1,p_2)$ is a synchronized pair of anchors and $T$ is a common anchored substring with respect to $A_1,A_2$.
\end{proof}

We are left with handling the case when the whole $T$ is a $(d,\rho)$-run, i.e., 
when $T$ is periodic with $|T|\ge d$ and $\per(T)\le \rho$.
In this case, we cannot guarantee that \emph{every} pair of occurrences of $T$ in $A_1$ and $A_2$
has a synchronized pair of anchors. For example, if $T=\texttt{a}^d$ and $S_1=S_2=\texttt{a}^n$ with $n\ge 2d$,
this would require $\Omega(n/\sqrt{d})$ anchors. (There are $\Omega(n^2)$ pairs of occurrences,
and each pair of anchors can accommodate at most $d+1$ out of these pairs.)

Hence, we focus on the leftmost occurrences of $T$ and observe that they start within the first $\per(T)$
positions of $(d,\rho)$-runs. To achieve synchronization in these regions,
we utilize the notion of the \emph{Lyndon root}~\cite{lyndonCrochemore:2012} $\lyn(X)$ of a periodic string $X$, defined as the lexicographically smallest rotation of $X[1\dd \per(X)]$. For each $(d,\rho)$-run $x$, we select as anchors the leftmost two positions where $\lyn(x)$ occurs within $x$ (they must exist due to $d\ge 3\rho-1$).

\begin{lemma}\label{lemma:addLyndonRoots}
	Let $A_1$ and $A_2$ contain the first two positions where the Lyndon root occurs within each $(d,\rho)$-run
	of $S_1$ and $S_2$, respectively. 
	If $T$ is a common substring of $S_1$ and $S_2$ such that the whole $T$ is a $(d,\rho)$-run,
	then $T$ is a common anchored substring of $S_1$ and $S_2$.
\end{lemma}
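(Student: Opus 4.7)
The plan is to work with the leftmost occurrences of $T$ in each string and align them via the Lyndon root, whose position within $T$ is an intrinsic property of $T$ and hence agrees across both occurrences.

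Setup: Let $\pi = \per(T)$, and pick the leftmost occurrence $S_1[i_1^*\dd j_1^*]$ of $T$ in $S_1$ and the leftmost occurrence $S_2[i_2^*\dd j_2^*]$ of $T$ in $S_2$. Since each occurrence is periodic with period $\pi \le \rho$ and has length $|T|\ge d$, it extends uniquely to a $(d,\rho)$-run $r_k = S_k[a_k\dd b_k]$ with $\per(r_k)=\pi$, for $k\in\{1,2\}$. I first observe the leftmost\emph{ness} bound $i_k^* - a_k \le \pi - 1$: otherwise, by periodicity of $r_k$, the shift $S_k[i_k^*-\pi\dd j_k^*-\pi]$ still lies inside $r_k$ and equals $T$, contradicting the choice of $i_k^*$.

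Alignment via the Lyndon root: Since $r_k$ and $T$ share the same shortest period, they share the same Lyndon root: $\lyn(r_k) = \lyn(T)$. Let $\mu\in[1\dd\pi]$ be the unique offset at which $\lyn(T)$ first appears inside $T$; this value depends only on $T$. Then $\lyn(r_k)$ has an occurrence in $S_k$ starting at the absolute position $p_k := i_k^* + \mu - 1$. I claim that $p_k$ is among the first two occurrences of $\lyn(r_k)$ inside $r_k$, so $p_k \in A_k$. Indeed, consecutive occurrences of $\lyn(r_k)$ inside $r_k$ are exactly $\pi$ apart, starting from some position $a_k + \lambda_k - 1$ with $\lambda_k \in [1\dd \pi]$, so $p_k = a_k + \lambda_k - 1 + k\pi$ for some integer $k\ge 0$. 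The bounds $i_k^* - a_k \le \pi - 1$ and $\mu,\lambda_k\in[1\dd\pi]$ force $k\in\{0,1\}$. The assumption $d \ge 3\rho - 1 \ge 3\pi - 1$ guarantees that both the first and the second Lyndon root occurrence fit inside $r_k$, so they are both selected into $A_k$ by the construction.

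Conclusion: Having $p_1 \in A_1$ and $p_2 \in A_2$ with $p_1 - i_1^* = p_2 - i_2^* = \mu - 1 \in [0\dd \pi-1] \subseteq [0\dd |T|]$ witnesses that the occurrences $S_1[i_1^*\dd j_1^*] = T = S_2[i_2^*\dd j_2^*]$ form a synchronized pair of anchors, so $T$ is a common anchored substring with respect to $A_1, A_2$.

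The main delicate point is the third sentence of the alignment paragraph: one must argue that the same integer offset $\mu - 1$ works for both strings. This is where restricting to \emph{leftmost} occurrences is essential, because it pins each occurrence of $T$ to within the first $\pi$ positions of its run, ensuring that the first internal Lyndon-root occurrence (which sits at offset $\mu$ regardless of the string) coincides with a globally chosen anchor (the first or second Lyndon-root occurrence in the run). The inequality $d \ge 3\rho - 1$ is exactly what makes these two global anchors well defined.
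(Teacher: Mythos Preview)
Your proof is correct and follows essentially the same approach as the paper's: take the leftmost occurrence of $T$ in each string, extend it to its ambient $(d,\rho)$-run, use leftmostness to bound the offset by $\pi-1$, and conclude that the first internal Lyndon-root occurrence of $T$ (at offset $\mu$) lands on one of the first two Lyndon-root occurrences of the run. One small notational slip: in the sentence ``$p_k = a_k + \lambda_k - 1 + k\pi$ for some integer $k\ge 0$'' you overload $k$, which was already the string index in $\{1,2\}$; use a fresh symbol there.
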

\begin{proof}
	Let $k$ be the leftmost position where $\lyn(T)$ occurs in $T$ and $T=S_1[i_1\dd j_1]$ be the leftmost occurrence of $T$ in $S_1$.
	Since $T$ is a $(d,\rho)$-run, $S_1[i_1\dd j_1]$ can be extended to a $(d,\rho)$-run $x$ in $S_1$.
	Note that $S_1[i_1\dd j_1]$ starts within the first $\per(T)$ positions of $x$;
	otherwise, $T$ would also occur at position $i_1-\per(T)$.
	Consequently, position $i_1+k-1$ is among the first $2\per(T)$ positions of $x$,
	and it is a starting position of $\lyn(x)=\lyn(T)$.
	As the subsequent occurrences of $\lyn(x)$ within $x$	start $\per(T)$ positions apart, we conclude that  $i_1+k-1$ is one of the first two positions
	where $\lyn(x)$ occurs within $x$. Thus, $p_1 := i_1+k-1\in A_1$.
	Symmetrically, $p_2 := i_2+k-1$ is added to $A_2$. Hence, $(p_1,p_2)$ is a synchronized pair of anchors and $T$ is a common anchored substring with respect to $A_1,A_2$.
\end{proof}

It remains to prove that \cref{lemma:addEdgeOfPeriodic,lemma:addLyndonRoots} yield $\Oh(\frac{n}{d})$
anchors and that this selection can be implemented efficiently. 
We use the following procedure as a subroutine:

\begin{lemma}[Kociumaka et al.~{\cite[Lemma 6]{shortestPeriod:2015}}]\label{lemma:findShortestPeriodic}
	Given a string $S$, one can decide in $\Oh(|S|)$ time and $\Oh(1)$ space if $S$ is periodic and, if so, compute $per(S)$.
\end{lemma}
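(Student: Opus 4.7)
The plan is to reuse the constant-space two-way pattern-matching framework of Crochemore and Perrin, whose core primitive is Duval's $\Oh(n)$-time, $\Oh(1)$-space computation of maximal suffixes. First, I would run Duval's algorithm twice to obtain the lexicographically maximal suffixes $v_<$ and $v_>$ of $S$ with respect to the two opposite orderings on $\Sigma$, together with their shortest periods $\per(v_<)$ and $\per(v_>)$, which are produced as a by-product of the same scan (Duval's algorithm maintains, at each position, both the current maximal suffix and its period). Among the two I would then select, following the Crochemore--Perrin rule, the one that yields a critical factorization $S = u \cdot v$; set $\ell = |u|$ and $q = \per(v)$.

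Next, I would invoke the Critical Factorization Theorem to argue that, whenever $\per(S) \le |S|/2$, the value $q$ coincides with $\per(S)$ and moreover the prefix $u$ is a suffix of $(v[1\dd q])^{\infty}$. Consequently, $q$ is the unique candidate I need to test for being the shortest period of $S$. I would verify it with a single left-to-right scan, checking $q \le |S|/2$ and $S[i] = S[i+q]$ for all $i \in [1, |S|-q]$; this takes $\Oh(|S|)$ time and uses only $\Oh(1)$ additional words (three counters and a constant number of characters). If the verification succeeds, output $q$ as $\per(S)$; otherwise, declare $S$ non-periodic. Correctness in the non-periodic direction is immediate since a successful scan really does witness a period, and any $q\le |S|/2$ is by definition a valid certificate of periodicity.

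The main obstacle is the Step~2 justification: one must argue that the Crochemore--Perrin choice of maximal suffix really yields a critical factorization whose $\per(v)$ matches $\per(S)$ in the periodic regime, and that no other candidate period can be the shortest. This is exactly the content of the Critical Factorization Theorem combined with the classical analysis of the two-way algorithm, where the interplay between the two opposite orderings is used to bound the local period at position $\ell$ from both sides. Once this is in hand, Steps~1 and~3 are straightforward linear-time, constant-space scans, and together they give the $\Oh(|S|)$-time, $\Oh(1)$-space bound claimed by the lemma.
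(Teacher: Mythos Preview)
The paper does not prove this lemma at all: it is stated with a citation to \cite{shortestPeriod:2015} and used as a black box. Your sketch is essentially the argument behind that cited result---compute the Crochemore--Perrin critical factorization via Duval's maximal-suffix routine, extract the candidate period $q=\per(v)$, and verify it with one linear scan---so there is nothing to compare against within this paper. One small remark on your write-up: the direction ``if the scan fails then $S$ is not periodic'' is the nontrivial one, and it does not follow from the Critical Factorization Theorem alone; you also need the easy observation that $\per(v)\le \per(S)$ (because $v$ is a factor of $S$), so that once the scan confirms $q$ is a period of $S$ you get $\per(S)=q$, and conversely if $\per(S)\le |S|/2$ then the specific properties of the maximal-suffix factorization (not just criticality) force $q=\per(S)$, hence the scan must succeed. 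You gesture at this by invoking ``the classical analysis of the two-way algorithm,'' which is fair for a sketch, but be aware that this is where the real work lies.
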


First, we bound the number of $(d,\rho)$-runs and explain how to generate them efficiently.

\begin{lemma}\label{lemma:findAllPeriodicSubstrings}
	Consider a string $S$ of length $n$ and positive integers $\rho,d$ with $3\rho-1\le d \le n$.
	The number of $(d,\rho)$-runs in $S$ is $\Oh(\frac{n}{d})$.
	Moreover, there is an $\Oh(1)$-space $\Oh(n)$-time deterministic algorithm reporting them one by one along with their periods.
\end{lemma}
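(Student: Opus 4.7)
\emph{Plan for Part 1 (counting $\Oh(n/d)$).} I order the $(d,\rho)$-runs by starting position, $s_1 < \cdots < s_k$, and aim to show $s_{i+1}-s_i \ge (d+1)/3$. The crux is bounding the overlap of two distinct $(d,\rho)$-runs by $2\rho-1$. Two runs sharing the same shortest period $p$ must be disjoint, since otherwise their union would be a $p$-periodic fragment extending one of them, contradicting maximality. For two runs with distinct shortest periods $p\ne p'\le\rho$, if the overlap had length at least $p+p'-\gcd(p,p')$ then the Fine--Wilf lemma would give $\gcd(p,p')$ as a period of the overlap; I then use the enclosing run's own longer period to transfer this smaller period one character outside the overlap (the arithmetic fits since $|r|\ge d\ge 3\rho-1\ge p+p'-1$), contradicting the maximality of one of the two runs. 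Combined with $|r_i|\ge d$, this yields $s_{i+1}-s_i \ge d-2\rho+1 \ge (d+1)/3$, using $\rho\le (d+1)/3$; hence $k\le 3n/(d+1)+1=\Oh(n/d)$.

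\emph{Plan for Part 2 (the enumeration algorithm).} I set $b:=2\rho$ and $\Delta:=d-2\rho+1\ge (d+1)/3$, and scan overlapping windows $W_i=S[1+i\Delta\dd i\Delta+b]$ in increasing order of $i$. For each window, I invoke \cref{lemma:findShortestPeriodic}, which costs $\Oh(b)=\Oh(\rho)$ time and $\Oh(1)$ space, to determine whether $W_i$ is periodic with period $p\le\rho$; the test is meaningful since $b\ge 2p$. When it is, I extend $W_i$ one character at a time to the left and to the right while $S[\cdot]=S[\cdot+p]$ holds, producing the maximal $p$-periodic fragment $R$. Any shorter period of $R$ would also be a period of $W_i$ (as $|W_i|\ge 2p$), contradicting $\per(W_i)=p$; so $R$ is a genuine run with $\per(R)=p$, and I output $(R,p)$ whenever $|R|\ge d$. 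No $(d,\rho)$-run is missed because its length $\ge d=b+\Delta-1$ guarantees that at least one of the arithmetic-progression windows lies entirely inside it.

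\emph{Deduplication and running time.} Several consecutive windows may lie inside the same run, so I keep in $\Oh(1)$ additional space the triple $(s,e,p)$ of the most recently reported run and, relying on the fact from Part~1 that $(d,\rho)$-runs are discovered in increasing order of starting position, deduplicate by comparing starts. To avoid re-extending a known run, I skip every window entirely contained in $S[s\dd e]$, an $\Oh(1)$ check per window. The main obstacle—and what makes skipping safe—is ruling out that a distinct $(d,\rho)$-run $R'$ overlapping the reported $R$ is missed: by Part~1 they share fewer than $2\rho=b$ positions, so any window inside $R'$ sticks out of $R$ and is therefore not skipped. The window-processing cost is $\Oh(b\cdot n/\Delta)=\Oh(n)$ since $b/\Delta\le 2$, and the total extension cost is bounded by $\sum_r |r|\le n+(k-1)(2\rho-1)=\Oh(n)$ using the count from Part~1. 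Altogether the algorithm runs in $\Oh(n)$ time and $\Oh(1)$ space.
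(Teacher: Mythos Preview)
Your proof is correct, and the algorithmic part is structurally close to the paper's: both scan length-$2\rho$ windows, test periodicity via \cref{lemma:findShortestPeriodic}, extend maximally, and skip windows already covered.  The paper uses stride $\rho$ (windows $u_k=S[k\rho\dd(k{+}2)\rho-1]$) and skips windows inside the last \emph{extended} fragment $v_k$ regardless of its length; you use the larger stride $\Delta=d-2\rho+1$ and skip only inside the last \emph{reported} run.  The substantive difference is how the $\Oh(n/d)$ count is obtained.  You prove it directly: a Fine--Wilf argument shows any two $(d,\rho)$-runs overlap in fewer than $2\rho$ positions, so consecutive starts are $\ge(d+1)/3$ apart.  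The paper instead reads the count off the algorithm: no two non-skipped extensions $v_k$ contain the same $u_{k'}$, so their total length is $\Oh(n)$, and each reported run has length $\ge d$.  Your route yields a standalone combinatorial bound (and the overlap fact is reused nicely to argue skipping is safe); the paper's route is shorter and bypasses the Fine--Wilf case analysis.

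One small looseness in your running-time argument: the sum $\sum_r|r|$ you bound also contains \emph{short} runs (periodic windows whose extension has length $<d$), and these are not covered by ``the count from Part~1''.  This is harmless---each short run contains at most one window since $|R|-b<\Delta$, so there are $\Oh(n/\Delta)$ of them contributing $<d$ each, still $\Oh(n)$---but the justification should say that rather than invoke Part~1.  The paper avoids this wrinkle by skipping inside $v_k$ even when $|v_k|<d$.
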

\begin{proof}
	Consider all fragments $u_k=S[k\rho \dd (k+2)\rho-1]$ with boundaries within $[n]$.
	Observe that each $(d,\rho)$-run $v$ contains at least one of the fragments $u_k$:
	if $v=S[i\dd j]$, then $u_k$ with $k=\lceil i/\rho\rceil$ starts at $k\rho \ge i$ and ends  
	at $(k+2)\rho-1\le i+\rho-1+2\rho-1=i+3\rho-2 \le i+d-1 \le j$.
	Moreover, if $v$ contains $u_k$, then $u_k$ is periodic with period $\per(u_k)=\per(v)\le \rho=\frac12|u_k|$ (the first equality is due to $|u_k|=2\rho \ge 2\per(v)$ and the periodicity lemma~\cite{fine1965uniqueness}),
	and $v$ can be obtained by maximally extending $u_k$ without increasing the shortest period.
	
	This leads to a simple algorithm generating all $(d,\rho)$-runs,
	which processes subsequent integers $k$ as follows:
	First, apply \cref{lemma:findShortestPeriodic} to test if $u_k$ is periodic 
	and retrieve its period $\rho_k$.
	If this test is successful,
	then maximally extend $u_k$ while preserving the period $\rho_k$,
	and denote the resulting fragment by $v_k$.
	If $|v_k|\ge d$, then report $v_k$ as a $(d,\rho)$-run.
	We also introduce the following optimization: after processing $k$,
	skip all indices $k'>k$ for which $u_{k'}$ is still contained in $v_k$.
	(These indices $k'$ are irrelevant due to $v_{k'}=v_k$ and they form an integer interval.)
	
	The algorithm of \cref{lemma:findShortestPeriodic} takes constant space and $\Oh(|u_k|)$ time,
	which sums up to $\Oh(n)$ across all indices $k$.
	The naive extension of $u_k$ to $v_k$ takes constant space and $\Oh(|v_k|)$ time.
	Due to the optimization, no two explicitly generated extensions $v_k$ contain the same fragment $u_{k'}$.
	Hence, the total length of the fragments $v_k$ (across indices $k$ which were not skipped) is $\Oh(n)$.
	Thus, the overall running time is $\Oh(n)$ and the number of runs reported is $\Oh(\frac{n}{d})$.
\end{proof}

We conclude with a complete procedure generating anchors in the periodic case.
\begin{proposition}\label{proposition:periodic-anchored-sets}
	There exists an $\Oh(1)$-space $\Oh(n)$-time algorithm that, given two strings $S_1,S_2$ of total length $n$, and parameters $d,\rho\in [n]$ with $d\ge 3\rho-1$,
	outputs sets $A_1,A_2$ of size $\Oh(\frac{n}{d})$ satisfying the following property:
	If $T$ is a common substring of $S_1$ and $S_2$ containing a $(d,\rho)$-run, then $T$ is a common anchored substring of $S_1$ and $S_2$ with respect to $A_1,A_2$.
\end{proposition}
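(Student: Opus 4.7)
The plan is to enumerate every $(d,\rho)$-run in $S_1$ and $S_2$ and emit $\Oh(1)$ carefully chosen anchors per run, so that the resulting $A_1,A_2$ simultaneously satisfy the hypotheses of \cref{lemma:addEdgeOfPeriodic} and \cref{lemma:addLyndonRoots}. Concretely, I would invoke \cref{lemma:findAllPeriodicSubstrings} separately on $S_1$ and on $S_2$, streaming out $\Oh(n/d)$ runs together with their shortest periods in $\Oh(n)$ total time and $\Oh(1)$ space. For each reported run $S_k[i\dd j]$ of period $p\le\rho$ in $S_k$ (with $k\in\{1,2\}$), I would emit into $A_k$ the two boundary positions $i$ and $j$, together with the two leftmost starting positions of $\lyn(S_k[i\dd j])$ inside $S_k[i\dd j]$, namely $k_1$ and $k_1+p$, where $k_1\in[i\dd i+p-1]$.

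Correctness is an immediate case analysis using the two preceding lemmas. Let $T$ be a common substring of $S_1,S_2$ containing a $(d,\rho)$-run $r$. If $r$ is a proper fragment of $T$, then the boundary anchors already suffice by \cref{lemma:addEdgeOfPeriodic}; otherwise the occurrence of $T$ coincides with $r$, so $T$ is itself a $(d,\rho)$-run and the Lyndon-root anchors suffice by \cref{lemma:addLyndonRoots}. Since each enumerated run contributes at most four anchors and \cref{lemma:findAllPeriodicSubstrings} reports $\Oh(n/d)$ runs in total, we obtain $|A_1|+|A_2|=\Oh(n/d)$.

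The main remaining obstacle is to compute $k_1$ for each run within the $\Oh(1)$-space budget. By definition, $k_1-i$ is the index of the lexicographically least cyclic rotation of $S_k[i\dd i+p-1]$. Because $|S_k[i\dd j]|\ge d\ge 3p-1$, the window $S_k[i\dd i+2p-2]$ lies inside the run and contains each of those $p$ rotations as a length-$p$ factor, so $k_1$ can be computed in $\Oh(p)$ time and $\Oh(1)$ extra space by a standard in-place least-rotation procedure (for instance, a Duval-based scan of $S_k[i\dd i+2p-2]$); the bound $k_1+p\le j$ follows from $k_1\le i+p-1$ and $j\ge i+3p-2$. Summing $\Oh(p)$ over the $\Oh(n/d)$ runs yields $\Oh(\sum p)\le \Oh(\rho\cdot n/d)\le \Oh(n)$, using $\rho\le d$ (which follows from $d\ge 3\rho-1$). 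Together with the $\Oh(n)$ cost of the run enumeration, the total running time is $\Oh(n)$ and the working space remains $\Oh(1)$, exactly as the proposition claims.
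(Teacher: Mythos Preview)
Your proposal is correct and follows essentially the same route as the paper: enumerate all $(d,\rho)$-runs via \cref{lemma:findAllPeriodicSubstrings}, locate the Lyndon root of each run in $\Oh(p)=\Oh(\rho)$ time and $\Oh(1)$ space with Duval's algorithm, emit four anchors per run (the two boundaries plus the first two Lyndon-root positions), and conclude with the case split between \cref{lemma:addEdgeOfPeriodic} and \cref{lemma:addLyndonRoots}. Your description of the second Lyndon-root position as $k_1+p$ is in fact the correct one (the paper writes $i+2\Delta$, which appears to be a slip), and your explicit verification that both Lyndon-root occurrences lie inside the run is a nice touch the paper leaves implicit.
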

\begin{proof}
	The algorithm first uses the procedure of \cref{lemma:findAllPeriodicSubstrings} to retrieve
	all $(d,\rho)$-runs in $S_1$ along with their periods. For each $(d,\rho)$-run $S_1[i\dd j]$, Duval's algorithm~\cite{duval:1982} is applied to find the minimum cyclic rotation of $S_1[i\dd i+\per(S_1[i\dd j])-1]$ in order to determine the Lyndon root $\lyn(S_1[i\dd j])$ represented by its occurrence at position $i+\Delta$ of $S_1$. Positions $i$, $i+\Delta$, $i+2\Delta$, and $j$ are reported as anchors in $A_1$.
	The same procedure is repeated for $S_2$ resulting in the elements of $A_2$ being reported one by one.
	
	The space complexity of this algorithm is $\Oh(1)$, and the running time is $\Oh(n)$ (for \cref{lemma:findAllPeriodicSubstrings}) plus $\Oh(\rho)=\Oh(d)$ per $(d,\rho)$-run (for Duval's algorithm). 
	This sums up to $\Oh(n)$ as the number of $(d,\rho)$-runs is $\Oh(\frac{n}{d})$.
	For the same reason, the number of anchors is $\Oh(\frac{n}{d})$.
	
	For each $T$, the anchors satisfy the required property due to \cref{lemma:addEdgeOfPeriodic} or \cref{lemma:addLyndonRoots}, depending on whether the $(d,\rho)$ run contained in $T$ is a proper fragment of $T$ or not.
\end{proof}

\subsection{$\OhTilde(n/\ell)$-space algorithm for arbitrary $\ell$}


The \LCSl problem reduces to an instance of the \textsc{LCAS} problem 
with a combination of anchors for the non-periodic case and the periodic case.
This yields the following result:
\begin{theorem}\label{thm:base_simple}
	The \LCSl problem can be solved deterministically in $\Oh(\frac{n\log^* n}{\ell}+\log n)$ space and $\Oh(n \log n)$ time, and in $\Oh(\frac{n}{\ell}+\log n)$ space and $\Oh(n\log n +\ell \log^2 n)$ time with high probability using a Las-Vegas randomized algorithm.
\end{theorem}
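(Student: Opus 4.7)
The plan is to reduce \LCSl to the \textsc{LCAS} problem by constructing, for the appropriate parameters, a combined anchor set that captures every common substring of length at least $\ell$; we then invoke \cref{thm:lcas} to solve the resulting LCAS instance. Concretely, set the threshold parameters $d = \lceil \tfrac35\ell\rceil$ and $\rho = \lfloor \tfrac15\ell\rfloor$ so that the condition $d \ge 3\rho - 1$ required by \cref{proposition:periodic-anchored-sets} is satisfied. Build two pairs of anchor sets: a pair $(A_1^{\mathrm{np}}, A_2^{\mathrm{np}})$ obtained from a $(\tfrac15\ell, \tfrac15\ell)$-partitioning set of $S_1 S_2$ as described in \cref{sec:synchronized_position}, and a pair $(A_1^{\mathrm{p}}, A_2^{\mathrm{p}})$ produced by \cref{proposition:periodic-anchored-sets} applied with the above $d,\rho$. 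Take $A_i := A_i^{\mathrm{np}} \cup A_i^{\mathrm{p}}$ for $i\in\{1,2\}$ and solve the resulting \textsc{LCAS} instance via \cref{thm:lcas}, returning the output as $G$.

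For correctness, by \cref{rem:equiv} it suffices to show that every common substring $T$ with $\ell \le |T| \le 2\ell$ is a common anchored substring with respect to $(A_1, A_2)$, since the LCAS algorithm then returns a common substring of length at least $|T|$. Given such a $T$, consider whether $T$ contains a $(\tfrac35\ell, \tfrac15\ell)$-run. If it does not, then \cref{lemma:findLCSifnotPeriodic} (applied with $\tau = \delta = \tfrac15\ell$) yields that $T$ is anchored by $(A_1^{\mathrm{np}}, A_2^{\mathrm{np}}) \subseteq (A_1,A_2)$. If it does, \cref{proposition:periodic-anchored-sets} guarantees that $T$ is anchored by $(A_1^{\mathrm{p}}, A_2^{\mathrm{p}}) \subseteq (A_1,A_2)$. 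In either case $T$ is a common anchored substring, so the LCAS output has length at least $|T|$.

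For the complexity, in the deterministic setting the partitioning set construction of Birenzwige et al.\ instantiated with $\tau = \Theta(\ell/\log^* n)$ yields $|A_i^{\mathrm{np}}| = \Oh(\tfrac{n\log^* n}{\ell})$ in $\Oh(\tfrac{n\log^* n}{\ell} + \log \ell)$ space and $\Oh(n \log \ell)$ time, while \cref{proposition:periodic-anchored-sets} contributes $\Oh(\tfrac{n}{\ell})$ anchors in $\Oh(1)$ space and $\Oh(n)$ time. So $|A_1| + |A_2| = \Oh(\tfrac{n \log^* n}{\ell})$, and \cref{thm:lcas} finishes in $\Oh(n\log n)$ time and $\Oh(\tfrac{n \log^* n}{\ell} + \log n)$ space, dominating the totals. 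In the randomized setting, the Las-Vegas $(\tfrac15\ell, \tfrac15\ell)$-partitioning set construction produces $|A_i^{\mathrm{np}}| = \Oh(\tfrac{n}{\ell})$ in $\Oh(\tfrac{n}{\ell} + \log n)$ space and $\Oh(n + \ell \log^2 n)$ time w.h.p., so the combined anchor sets have size $\Oh(\tfrac{n}{\ell})$ and the overall running time is $\Oh(n \log n + \ell \log^2 n)$ with space $\Oh(\tfrac{n}{\ell} + \log n)$.

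The only delicate point is making sure the two selection schemes truly cover the complementary cases at the chosen parameters; the partitioning-set guarantee only rules out full periodic blocks of length exceeding $\tau = \tfrac15\ell$ within the \emph{interior} $T[1+\delta \dd |T|-\delta]$, so one has to verify that the resulting interior run extends (while preserving its period) to a $(\tfrac35\ell, \tfrac15\ell)$-run inside $T$ itself —  but this is exactly the extension argument embedded in the proof of \cref{lemma:findLCSifnotPeriodic}, hence no further work is required.
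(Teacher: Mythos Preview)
Your proof is correct and follows essentially the same approach as the paper's: combine the partitioning-set anchors of \cref{sec:synchronized_position} with the periodic-case anchors of \cref{proposition:periodic-anchored-sets} (at parameters $d=\tfrac35\ell$, $\rho=\tfrac15\ell$), then invoke \cref{thm:lcas} on the union and split the correctness argument on whether $T$ contains a $(\tfrac35\ell,\tfrac15\ell)$-run. The paper actually argues correctness for all $|T|\ge \ell$ rather than restricting to $\ell\le |T|\le 2\ell$ via \cref{rem:equiv} (see \cref{rem:extra}), but your version is equally valid and the complexity analysis matches.
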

\begin{proof}
	The algorithm first selects anchors $A_1$ and $A_2$ based on a $(\frac15\ell,\frac15\ell)$-partitioning set,
	as described in \cref{sec:synchronized_position} (the partitioning set can be constructed using a deterministic or a randomized procedure). Additional anchors $A_1'$ and $A_2'$ are selected using \cref{proposition:periodic-anchored-sets} with $d=\frac35\ell$ and $\rho=\frac15\ell$.
	Finally, the algorithm runs the procedure of \cref{thm:lcas} with anchors $A_1\cup A'_1$ and $A_2\cup A'_2$ and forwards the obtained result to the output.  
	
	With this selection of anchors, every common substring $T$ of length $|T|\ge \ell$
	is a common anchored substring. 
	Depending on whether $T$ contains a $(\frac35\ell,\frac15\ell)$-run or not, this follows from \cref{proposition:periodic-anchored-sets} and \cref{lemma:findLCSifnotPeriodic}, respectively.
	Consequently, the solution to the LCAS problem is a common substring of length at least $|T|$.
	
	\cref{proposition:periodic-anchored-sets} yields $\Oh(\frac{n}{\ell})$ anchors whereas
	a partitioning set yields $\Oh(\frac{n\log^* n}{\ell})$ or $\Oh(\frac{n}{\ell})$ anchors, depending on whether
	a deterministic or a randomized construction is used. Consequently, the space and time complexity is as stated in 
	the theorem, with the cost dominated by both the partitioning set construction and the algorithm of \cref{thm:lcas}.
\end{proof}
\begin{remark}\label{rem:extra}
	Note that the algorithms of \cref{thm:base_simple} return a longest common substring 
	as long as $\lcs(S_1,S_2)\ge \ell$ (and not just when $\ell \le \lcs(S_1,S_2) \le 2\ell$ as \LCSl requires).
\end{remark}

\subsection{$\Oh(1)$-space algorithm for $\ell=\Omega(n)$}

In \cref{thm:base_simple}, the space usage involves an $\Oh(\log n)$ term, which becomes dominant for very large $\ell$. 
In this section, we design an alternative $\Oh(1)$-space algorithm for $\ell = \Omega(n)$.
Later, in \cref{thm:findLCSwithConstantSpace}, we generalize this algorithm to arbitrary $\ell$, which lets us obtain an analog of \cref{thm:base_simple} with the $\Oh(\log n)$ term removed from the space complexity.

Our main tool is a constant-space pattern matching algorithm.
\begin{lemma}[Galil-Seiferas~\cite{DBLP:journals/jcss/GalilS83}, Crochemore-Perrin~\cite{CP91}]\label{lem:constant-space-pm}
	There exists an $\Oh(1)$-space $\Oh(|P|+|T|)$-time algorithm that, given a read-only pattern $P$ and a read-only text $T$, reports the occurrences of $P$ in $T$ in the left-to-right order.
\end{lemma}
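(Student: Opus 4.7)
The plan is to adopt the Crochemore--Perrin approach built on the critical factorization theorem: every string $P$ admits a factorization $P=uv$ with $|u|<\per(P)$ such that the local period at position $|u|$ equals the global period $\per(P)$. First I would compute such a factorization in $\Oh(|P|)$ time and $\Oh(1)$ space by running Duval's in-place maximal-suffix algorithm twice, once with respect to the alphabet order and once with respect to its reverse; taking the longer of the two maximal suffixes (with a small adjustment when its length is at most $\per(P)$) yields a valid critical factorization together with the value of $\per(P)$ restricted to the prefix read so far.

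With $P=uv$ in hand, the matching phase processes $T$ left-to-right, maintaining a candidate alignment at some position $i$. I would first scan $v$ forward against $T[i+|u|\dd]$ until the first mismatch or the end of $v$; if $v$ matched fully, I would then scan $u$ backward against $T[i\dd i+|u|-1]$. A full match is reported, after which the alignment is shifted by $\per(P)$ (any closer occurrence would contradict the period of $P$). A mismatch inside $v$ at offset $j$ from its start permits a shift by $j+1$ by the defining property of a critical factorization, because any earlier occurrence starting strictly between $i$ and $i+j+1$ would induce a local period at position $|u|$ strictly smaller than $\per(P)$. A mismatch during the backward scan of $u$ likewise permits a shift of $\per(P)$. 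Across the entire text, the leftmost unread position of $T$ advances by at least a constant fraction of the characters newly compared, so the total number of comparisons is $\Oh(|P|+|T|)$, and only a constant number of word-sized variables (two alignment pointers, $|u|$, $\per(P)$, and a scan offset) are kept.

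The main obstacle, and the reason Galil--Seiferas and Crochemore--Perrin are nontrivial, is the ``short period'' case $\per(P)\le|u|$: after a full match, successive occurrences overlap by more than $|u|$ characters, so rescanning $v$ from scratch would destroy linearity. I would handle it by carrying over a single integer recording how far to the right the periodic extension of the last successful occurrence reached, so that subsequent candidate alignments resume scanning exactly where the previous scan stopped rather than from the new aligned start. Verifying that this auxiliary pointer preserves correctness (no occurrence missed, no occurrence reported twice) and that it yields amortized $\Oh(1)$ work per character is the delicate part of the argument; the Galil--Seiferas alternative achieves the same effect through a ``perfect factorization'' preprocessing, but I would prefer the Crochemore--Perrin route because the factorization procedure is manifestly $\Oh(1)$-space. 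In either case the occurrences are reported in left-to-right order by construction.
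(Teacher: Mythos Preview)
The paper does not prove this lemma at all: it is stated as a black-box citation to Galil--Seiferas and Crochemore--Perrin, with no accompanying argument. There is therefore nothing in the paper to compare your proposal against.

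That said, your sketch is a faithful outline of the Crochemore--Perrin Two-Way algorithm and would, if carried through, establish the lemma. One small slip: you phrase the delicate case as ``$\per(P)\le|u|$'', but by the critical factorization property one always has $|u|<\per(P)$, so that inequality never holds. The genuine case split in Crochemore--Perrin is whether the candidate period returned by the maximal-suffix computation is a period of the \emph{entire} pattern (equivalently, whether $u$ is a suffix of the length-$\per(P)$ prefix of $v$); when it is, the memorization trick you describe is needed to avoid rescanning the overlap after a shift of $\per(P)<|v|$, and when it is not, a large fixed shift (at least $|u|+1$) suffices with no memory. With that correction your plan is sound.
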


\begin{lemma}\label{lem:smallSpaceAlgorithm}
	The \LCSl problem can be solved deterministically in $\Oh(1)$ space and $\Oh(n)$ time
	for $\ell = \Omega(n)$.
\end{lemma}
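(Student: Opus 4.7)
My plan is to mirror the anchor-based approach of \cref{thm:base_simple}, while replacing the partitioning-set construction -- responsible for the $\Oh(\log n)$ term in its space bound -- by a direct pattern-matching strategy. This replacement becomes affordable precisely because $\ell=\Omega(n)$ forces every relevant anchor set to have only constant size.

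\textbf{Periodic component.} I would invoke \cref{proposition:periodic-anchored-sets} with $d=\frac35\ell$ and $\rho=\frac15\ell$; the precondition $d\ge 3\rho-1$ holds trivially. Because $\ell=\Omega(n)$, the returned sets satisfy $|A_1|+|A_2|=\Oh(n/\ell)=\Oh(1)$, so I can afford to materialize them explicitly. For each of the $\Oh(1)$ pairs $(p_1,p_2)\in A_1\times A_2$, I compute $|\LCP(S_1[p_1\dd],S_2[p_2\dd])|$ and $|\LCP^r(S_1[\dd p_1-1],S_2[\dd p_2-1])|$ by direct character-by-character comparison in $\Oh(n)$ time and $\Oh(1)$ space. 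By \cref{fct:lcas}, the maximum sum over all pairs equals the length of some common substring witnessed at a synchronized pair, and this takes care of every common substring that contains a $(\frac35\ell,\frac15\ell)$-run.

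\textbf{Non-periodic component.} In place of a partitioning set, I would use oblivious, evenly spaced candidate positions: set $s=\lfloor\frac{2\ell}{5}\rfloor$ and iterate over $i\in\{1,1+s,1+2s,\ldots\}$ with $i+\frac35\ell-1\le|S_1|$, of which there are $\Oh(n/\ell)=\Oh(1)$. For each $i$, I form $P=S_1[i\dd i+\frac35\ell-1]$ and apply \cref{lemma:findShortestPeriodic} to test whether $P$ is periodic and, if so, retrieve $\per(P)$. If $\per(P)\le\frac15\ell$, I discard this $i$: $P$ then extends uniquely to a $(\frac35\ell,\frac15\ell)$-run while preserving the shortest period, so any common substring $T$ that contains this occurrence of $P$ also contains a $(\frac35\ell,\frac15\ell)$-run and is already handled by the periodic component. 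Otherwise $\per(P)>\frac15\ell$, and a standard overlap argument (two occurrences of $P$ at distance less than $|P|$ must be at distance at least $\per(P)$) bounds the number of occurrences of $P$ in $S_2$ by $\Oh(|S_2|/\per(P))=\Oh(n/\ell)=\Oh(1)$. I enumerate them using the constant-space pattern-matching algorithm of \cref{lem:constant-space-pm} in $\Oh(|P|+|S_2|)=\Oh(n)$ time, and for each reported occurrence at position $j$ I extend forward and backward from the anchor pair $(i,j)$ by direct comparison, updating a running best. The total cost is $\Oh(1)\cdot\Oh(n)=\Oh(n)$ time in $\Oh(1)$ space.

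\textbf{Correctness and main obstacle.} The algorithm returns the longest common substring it has seen, or the empty string if none was found (a valid \LCSl output whenever $\lcs(S_1,S_2)<\ell$). To verify the \LCSl guarantee via \cref{rem:equiv}, consider an arbitrary common substring $T$ with $\ell\le|T|\le 2\ell$ occurring at $S_1[a_1\dd a_1+|T|-1]=S_2[a_2\dd a_2+|T|-1]$. If $T$ contains a $(\frac35\ell,\frac15\ell)$-run, then \cref{proposition:periodic-anchored-sets} together with \cref{fct:lcas} guarantees the periodic component outputs a string of length $\ge|T|$. Otherwise, the interval $[a_1,a_1+|T|-\frac35\ell]$ has at least $\frac{2\ell}{5}+1>s$ integers, so it contains some stride-$s$ candidate $i$; the corresponding pattern $P$ lies entirely inside $T$, so $\per(P)\le \frac15\ell$ would force a $(\frac35\ell,\frac15\ell)$-run inside $T$ -- a contradiction -- hence $i$ is retained, $P$ occurs in $S_2$ at the synchronized position $a_2+(i-a_1)$, and extending this anchor pair recovers the whole of $T$. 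I expect the main obstacle to be the periodicity bookkeeping: cleanly dispatching every candidate common substring to exactly one of the two components and ensuring that in the non-periodic component the number of occurrences of $P$ in $S_2$ is bounded so that the $\Oh(n)$ per-candidate time budget is sustained.
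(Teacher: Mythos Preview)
Your proposal is correct and follows essentially the same approach as the paper's own proof: both use \cref{proposition:periodic-anchored-sets} for the periodic case and, for the non-periodic case, replace the partitioning set by $\Oh(1)$ evenly spaced length-$\tfrac35\ell$ fragments of $S_1$ whose occurrences in $S_2$ are enumerated with constant-space pattern matching, then extend each anchor pair naively. The only cosmetic differences are your stride $\lfloor\tfrac{2\ell}{5}\rfloor$ versus the paper's $\tfrac{\ell}{5}$, and your use of \cref{lemma:findShortestPeriodic} to detect $\per(P)\le\tfrac15\ell$ explicitly where the paper infers it from an occurrence-count threshold; both choices yield the same $\Oh(1)$-space, $\Oh(n)$-time bounds.
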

\begin{proof}
	We show how to find $\Oh(1)$ anchors such that if $T$ is a common substring of $S_1$ and $S_2$  of length $|T|\ge\ell$, then $T$ is a common anchored substring of $S_1$ and $S_2$.
	
	We first use \cref{proposition:periodic-anchored-sets} with $d=\frac{3}{5}\ell$ and $\rho=\frac{1}{5}\ell$ to generate anchors $A'_1$ and $A'_2$ for the periodic case. The set of these anchors has a size of $\Oh(\frac nd)=\Oh(1)$,
	and, if $T$ contains a $(\frac{3}{5}\ell,\frac{1}{5}\ell)$-run, then $T$ is  a common anchored substring of $S_1$ and $S_2$ with respect to $A'_1,A'_2$.
	
	In order to accommodate the case where $T$ does not contain any $(\frac{3}{5}\ell,\frac{1}{5}\ell)$-run, we construct sets $A_1$ and $A_2$ as follows.
	Consider all the fragments $u_k=S_1[k\frac \ell5\dd (k+3)\frac\ell5-1]$ with boundaries within $[n]$. 
	For each such fragment, add $k\frac{\ell}{5}$ into $A_1$. In addition, use \cref{lem:constant-space-pm} to find all occurrences of $u_k$ in $S_2$, and add all the starting positions of the occurrences to $A_2$,
	unless the number of occurrences exceeds $\frac n{\ell/5}$ (then, $\per(u_k)\le \frac \ell5$).
	The number of fragments $u_k$ is $\Oh(\frac n{\ell/5})=\Oh(1)$, so the sets $A_1$ and $A_2$ contain $\Oh(1)$ elements.
	
	Let  $T=S_1[i_1\dd j_1]=S_2[i_2\dd j_2]$ be  arbitrary occurrences of $T$ in $S_1$ and $S_2$, respectively. Then for $k=\lceil{\frac {i_1}{\ell/5}}\rceil$, the fragment $u_k$ is contained within $S_1[i_1\dd j_1]$. If $\per(u_k)\le \frac15\ell$, then the occurrence of $u_k$ in $T$ can be extended to a $(\frac35\ell,\frac15\ell)$-run in $T$ (and that case has been accommodated using $A'_1$ and $A'_2$). Otherwise, $p_1:=k\frac{\ell}{5}\in A_1$ and all the positions where $u_k$ occurs in $S_2$, including $p_2:=i_2+(k\frac{\ell}{5}-i_1)$,  are in $A_2$. 
	Therefore, $(p_1,p_2)$ is a synchronized pair of anchors and $T$ is a common anchored substring with respect to $A_1,A_2$.

	The number of pairs $(p_1,p_2)\in (A_1\cup A'_1)\times  (A_2\cup A'_2)$ is $\Oh(1)$.
	For each such pair, the algorithm computes $|\LCP(S_1[p_1\dd],S_2[p_2\dd])|+|\LCP^r(S_1[\dd p_1-1],S_2[\dd p_2-1])|$ naively, and returns the common substring corresponding to a maximum among these values. The computation for each pair takes $\Oh(n)$ time. By the argument above, the algorithm finds a common substring of length at least $|T|$ for every common substring $T$ with $|T|\ge \ell$. 
\end{proof}

\section{Time-space tradeoff for the \LCSl problem}\label{sec:knownLenAlgo}
In this section, we show how to use the previous algorithms in order to solve the \LCSl problem
in space $\Oh(s)$, where $s$ is a tradeoff paramater specified on the input.
Our approach relies on the following algorithm which, given a paramter $m\ge \ell$,
reduces a single arbitrary instance of \LCSl to $\Oh(\lceil\frac{n}{m}\rceil^2)$
instances of \LCSl with strings of length $\Oh(m)$.

\begin{algorithm}[H]
	\caption{Self-reduction of \LCSl to many instances on strings of length $\Oh(m)$}\label{alg:reduction}
	\ForEach{$q_1\in [|S_1|]$ s.t. $q_1 \equiv 1 \pmod{m}$ \textrm{and} $q_2\in [|S_2|]$ s.t. $q_2 \equiv 1 \pmod{m}$}{
		Solve \LCSl on $S_1[q_1\dd \min(q_1+3m-1,|S_1|)]$ and $S_2[q_2\dd \min(q_2+3m-1,|S_2|)]$\;
	}
	\Return the longest among the common substrings reported\;
\end{algorithm}

\cref{alg:reduction} clearly reports a common substring of $S_1$ and $S_2$.
Moreover, if $T$ is a common substring of $S_1$ and $S_2$ satisfying $\ell \le |T| \le 2\ell$,
then $T$ is contained in one of the considered pieces $S_1[q_1\dd \min(q_1+3m-1,|S_1|)]$
(the one with $q_1 = 1+m\lfloor\frac{i_1}{m}\rfloor$ if $T=S_1[i_1\dd j_1]$)
and $T=S_2[i_2\dd j_2]$ is contained in one of the considered pieces $S_2[q_2\dd \min(q_2+3m-1,|S_2|)]$
(the one with $q_2 = 1+m\lfloor\frac{i_1}{m}\rfloor$ if $T=S_2[i_2\dd j_2]$).
Thus, the common substring reported by \cref{alg:reduction} satisfies the characterization of the \LCSl problem given in \cref{rem:equiv}.

\begin{theorem}\label{thm:findLCSwithConstantSpace}
	The \LCSl problem can be solved deterministically in $\Oh(1)$
	space and $\Oh(\frac{n^2}{\ell})$ time.
\end{theorem}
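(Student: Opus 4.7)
The plan is to combine \cref{alg:reduction} with \cref{lem:smallSpaceAlgorithm}. Specifically, I would invoke \cref{alg:reduction} with parameter $m = \ell$, using the $\Oh(1)$-space $\Oh(n)$-time algorithm of \cref{lem:smallSpaceAlgorithm} as the subroutine for each inner \LCSl call. Correctness follows immediately from the discussion preceding the theorem statement (the output of \cref{alg:reduction} satisfies the characterization of \LCSl from \cref{rem:equiv}), so only the complexity bounds need to be verified.

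For the time bound, each sub-instance consists of substrings of total length at most $6\ell$, so the threshold $\ell$ is $\Omega(n')$ for the sub-input size $n'$, and the hypothesis of \cref{lem:smallSpaceAlgorithm} is met. Hence each inner call runs in $\Oh(\ell)$ time. The number of pairs $(q_1,q_2)$ enumerated by \cref{alg:reduction} is $\lceil |S_1|/\ell\rceil \cdot \lceil |S_2|/\ell\rceil = \Oh(\frac{n^2}{\ell^2}+1)$. Multiplying gives total running time $\Oh(\frac{n^2}{\ell} + \ell) = \Oh(\frac{n^2}{\ell})$, where the last simplification uses $\ell \le n$.

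The main point I would be careful about is the $\Oh(1)$ space bound, since it is easy to accidentally introduce logarithmic overhead. The outer loop must be implemented so that only two integer counters (for $q_1$ and $q_2$) are stored; each sub-instance is passed by reference into the read-only input rather than copied; and only a single current-best witness, which fits in $\Oh(1)$ machine words per the convention stated in the introduction, is retained across iterations and returned at the end. Since \cref{lem:smallSpaceAlgorithm} itself uses $\Oh(1)$ space, the combined procedure uses $\Oh(1)$ space overall, completing the proof.
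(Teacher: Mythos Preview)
Your proposal is correct and takes essentially the same approach as the paper: apply \cref{alg:reduction} with $m=\ell$ and plug in \cref{lem:smallSpaceAlgorithm} for each sub-instance. Your version is in fact more carefully worked out than the paper's two-line proof, explicitly checking that the $\ell=\Omega(n')$ hypothesis of \cref{lem:smallSpaceAlgorithm} is met and that the outer loop incurs only $\Oh(1)$ working space.
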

\begin{proof}
	We apply the self-reduction of \cref{alg:reduction} with $m=\ell$ to the algorithm of \cref{lem:smallSpaceAlgorithm}.
	The running time is $\Oh((\frac{n}{m})^2 m) = \Oh(\frac{n^2}{m})=\Oh(\frac{n^2}{\ell})$
	and the space complexity is constant.
\end{proof}

This result allows for the aforementioned improvement upon the algorithms of \cref{thm:base_simple}.%
\thmbaseresult*
\begin{proof}
	If $\ell\ge \frac{n}{\log n}$, we use the algorithm of \cref{thm:findLCSwithConstantSpace},
	which costs $\Oh(\frac{n^2}{\ell})=\Oh(n \log n)$ time.
	Otherwise, we use the algorithm of \cref{thm:base_simple}.
	The running time is $\Oh(n\log n + \ell \log^2 n)=\Oh(n \log n)$,
	and the space complexity is $\Oh(\frac{n\log^* n}{\ell}+\log n)=\Oh(\frac{n\log^* n}{\ell})$
	or $\Oh(\frac{n}{\ell}+\log n)=\Oh(\frac{n}{\ell})$, respectively.
\end{proof}

A time-space tradeoff is, in turn, obtained using \cref{alg:reduction} on top of \cref{thm:base_improved}.
\begin{theorem}\label{thm:findLCSwithUser}
	Given a parameter $s\in [1,n]$, the \LCSl problem can be solved deterministically in $\Oh(s)$
	space and $\Oh(\frac{n^2\log n \log^* n}{\ell \cdot s}+n\log n)$ time, and in $\Oh(s)$ space and $\Oh(\frac{n^2\log n}{\ell \cdot s}+n\log n)$ time with high probability using a Las-Vegas randomized algorithm.
\end{theorem}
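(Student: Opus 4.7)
The plan is to apply \cref{alg:reduction} on top of the $\OhTilde(n)$-time \LCSl algorithm of \cref{thm:base_improved}, choosing the block size $m$ so that each of the resulting sub-instances (of total length $\Oh(m)$) fits within the $\Oh(s)$ space budget. For the randomized version I would take $m := s\ell$, which is always at least $\ell$ since $s\ge 1$; \cref{thm:base_improved} then uses $\Oh(m/\ell)=\Oh(s)$ space w.h.p.\ on each sub-instance. For the deterministic version, the natural choice is $m := \lceil s\ell/\log^* n\rceil$, so that the deterministic bound $\Oh(\frac{m\log^* m}{\ell})\le \Oh(\frac{m\log^* n}{\ell})$ also collapses to $\Oh(s)$.

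Given these choices, the time analysis is routine. The loop in \cref{alg:reduction} spawns $\Oh(\lceil n/m\rceil^2)$ sub-instances, each solved in $\Oh(m\log m)=\Oh(m\log n)$ time, for a total of
\[
\Oh\!\left(\lceil n/m\rceil^2 \cdot m \log n\right) \;=\; \Oh\!\left(\tfrac{n^2\log n}{m} + n\log n\right),
\]
where the additive $n\log n$ absorbs the degenerate case $m\ge n$ (a single call on the whole input). Substituting $m=\lceil s\ell/\log^* n\rceil$ and $m=s\ell$ yields the two bounds claimed. The auxiliary bookkeeping in \cref{alg:reduction} needs only $\Oh(1)$ extra words (the current pair $(q_1,q_2)$ and the best common substring seen so far), so the overall space remains $\Oh(s)$. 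Correctness is immediate from the discussion following \cref{alg:reduction} (which justifies the self-reduction under the assumption $m\ge\ell$), combined with \cref{rem:equiv} and the correctness of \cref{thm:base_improved} on each sub-instance.

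The only delicate point I anticipate is the boundary regime $s<\log^* n$ in the deterministic setting, where the choice $m=\lceil s\ell/\log^* n\rceil$ would violate the $m\ge\ell$ hypothesis of \cref{alg:reduction}. Here I would bypass the self-reduction entirely and simply invoke \cref{thm:findLCSwithConstantSpace}, which uses $\Oh(1)\sub\Oh(s)$ space and $\Oh(n^2/\ell)$ time; since $s\le \log^* n\le \log n\log^* n$ in this regime, this time bound is absorbed into the claimed $\Oh(\frac{n^2\log n\log^* n}{s\ell})$. Beyond this clean case split, I do not expect any real obstacle: the argument is a direct composition of \cref{alg:reduction}, \cref{thm:base_improved}, and \cref{thm:findLCSwithConstantSpace}.
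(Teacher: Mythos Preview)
Your proposal is correct and essentially identical to the paper's own proof: the same choices $m=s\ell$ (randomized) and $m=\lceil s\ell/\log^* n\rceil$ (deterministic) on top of \cref{alg:reduction} and \cref{thm:base_improved}, the same time analysis splitting on $m\gtrless n$, and the same fallback to \cref{thm:findLCSwithConstantSpace} when $s<\log^* n$.
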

\begin{proof}
	For a randomized algorithm, we apply the self-reduction of \cref{alg:reduction} with $m=\ell \cdot s$ to the algorithm of \cref{thm:base_improved}.
	The space complexity is $\Oh(\frac{m}{\ell})=\Oh(s)$,
	whereas the running time is $\Oh(n\log n)$ if $m\ge n$ and $\Oh((\frac{n}{m})^2\cdot m\log m) = \Oh(\frac{n^2\log m}{m})= \Oh(\frac{n^2\log n}{\ell \cdot s})$ otherwise.
	
	A deterministic version relies on the algorithm of \cref{thm:findLCSwithConstantSpace}
	for $s < \log^* n$,
	which costs $\Oh(\frac{n^2}{\ell})=\Oh(\frac{n^2\log^* n}{\ell\cdot s})$ time.
	For $s\ge \log^* n$, we apply the self-reduction of \cref{alg:reduction} with $m=\frac{\ell\cdot s}{\log^*n}$ to the algorithm of \cref{thm:base_improved}.
	The space complexity is $\Oh(\frac{m\log^*n}{\ell})=\Oh(s)$,
	whereas the running time is $\Oh(n\log n)$ if $m \ge n$
	and $\Oh((\frac{n}{m})^2\cdot m\log m) = \Oh(\frac{n^2\log m}{m})= \Oh(\frac{n^2\log n \log ^* n}{\ell \cdot s})$ otherwise.
\end{proof}

\section{Time-space tradeoff for the LCS problem}\label{sec:finalAlgo}
In order to solve the LCS problem in time depending on $\lcs(S_1,S_2)$,
we solve \LCSl for exponentially decreasing thresholds $\ell$.

\begin{algorithm}[H]
	\caption{A basic reduction from the LCS problem to the \LCSl problem}\label{alg:findingLength}
	$\ell = n$\;
	\Do{$|T| < \ell$}{
		$\ell = \ell / 2$\;
		$T = \LCSl(S_1,S_2)$\;
	}
	\Return $T$\;
\end{algorithm}

In \cref{alg:findingLength}, as long as $\ell > \lcs(S_1,S_2)$, \LCSl clearly returns a common substring shorter than $\ell$.
In the first iteration when this condition is not satisfied, we have $\ell \le \lcs(S_1,S_2) < 2\ell$,
so \LCSl must return a longest common substring.

If the algorithm of \cref{thm:findLCSwithConstantSpace} is used for \LCSl,
then the space complexity is $\Oh(1)$, and the running time is 
$\Oh(\sum_{i=1}^{\lceil \log \frac{n}{L}\rceil} \frac{n^2}{n/2^i})= \Oh(\sum_{i=1}^{\lceil \log \frac{n}{L}\rceil} n\cdot 2^i) = \Oh(\frac{n^2}{L})$,where $L=\lcs(S_1,S_2)$.%

\thmconstantspaceresult*

The $\Oh(1)$-space solution is still used if the input space restriction is $s=\Oh(\log n)$.
Otherwise, we start with $\ell = \Theta(\frac{n}{s})$ (in the randomized version) or $\ell = \Theta(\frac{n\log^*n}{s})$ (in the deterministic version) and a single call to the algorithm of \cref{thm:base_simple}. This is correct due to \cref{rem:extra},
the space complexity is $\Oh(s)$, and the running time is $\Oh(n \log n)$. 
In subsequent iterations, the procedure of \cref{thm:findLCSwithUser} is used, and its running time is dominated by the first term: $\Oh(\frac{n^2\log n}{\ell \cdot s})$ or $\Oh(\frac{n^2\log n \log^* n}{\ell \cdot s})$, respectively. These values form a geometric progression for exponentially decreasing $\ell$, dominated by the running time of the last iteration: $\Oh(\frac{n^2\log n}{L \cdot s})$ and $\Oh(\frac{n^2\log n \log^* n}{L \cdot s})$, respectively. This analysis yields our main result: 
{\renewcommand\footnote[1]{}\thmmainresult*}

\bibliographystyle{plainurl}
\bibliography{lcs_bibl}
	
\end{document}